\newcommand{\Img}[0]{\mathrm{Im}}
\newcommand{\Lcal}[0]{\mathcal{L}}
\newcommand{\N}[0]{\mathbb{N}}
\newcommand{\Out}[0]{\mathrm{Out}}
\newcommand{\Cl}[0]{\mathit{Cl}}
\newcommand{\emptyword}{\epsilon}
\newcounter{problems}
\newtheorem{problem}[problems]{Problem}
\title{Minimization and Synthesis of the Tail in
Sequential Compositions of Mealy machines} %TODO Please add
\author{
Alberto Larrauri
}{
Graz University of Technology
}{}{}{}%TODO mandatory, please use full name; only 1 author per \author macro; first two parameters are mandatory, other parameters can be empty. Please provide at least the name of the affiliation and the country. The full address is optional
\author{Roderick Bloem}{Graz University of Technology}{}{}{}
\authorrunning{
% A. Larrauri and R. Bloem
} %TODO mandatory. First: Use abbreviated first/middle names. Second (only in severe cases): Use first author plus 'et al.'
 \keywords{ }
\begin{document}
\maketitle
\begin{abstract}

We consider a system consisting of a sequential composition of Mealy machines, called head and tail. We study two problems related to these systems. In the first problem, models of both head and tail components are available, and the aim is to obtain a replacement for the tail with the minimum number of states. We introduce a minimization method for this context which yields an exponential improvement over the state of the art.  In the second problem, only the head is known, and a desired model for the whole
system is given. The objective is to construct a tail that causes the system to behave according to
the given model. We show that, while it
is possible to decide in polynomial time whether such a tail exists, there are instances where its size is exponential in the sizes of the head and the desired system. This shows that the
complexity of the synthesis procedure is at least exponential, matching the upper bound in complexity
provided by the existing methods for solving unknown component equations.
\end{abstract}
\newpage
\section{Introduction}
\label{sec:intro}

The optimization of completely specified and incompletely specified Finite
State Machines are classical problems 
\cite{DBLP:books/aw/HopcroftU79,DBLP:journals/jacm/Ginsburg59,DBLP:journals/tc/PaullU59}. In practice, systems are usually decomposed into multiple FSMs that can be considered more or less separately. These decompositions lead to problems that have been studied intensively \cite{DBLP:books/daglib/0086041,harrisSynthesisFiniteState1998,villaUnknownComponentProblem2012}. First, to optimize a system, we can optimize its components separately. However, there may be more room for optimization
when considering them together. This way, it is possible to find $T$ and $T^\prime$ of different sizes that are not equivalent in isolation, but that can be used interchangeably in the context of another component $H$ \cite{unger1965flow}. 
% $T \circ H$ and $T^\prime \circ H$ may be equivalent even if $M \not\equiv M'$.
Second, for a context $H$ and an overall desired behaviour $M$,  we may look for a component $T$ such that $T \circ H \equiv M$ \cite{benedettoModelMatchingFinitestate2001, watanabeMaximumSetPermissible1993a,villaUnknownComponentProblem2012}. This problem may occur, for instance, in \emph{rectification}, where a designer wants to repair or change the behaviour of a system by only modifying a part of it.

In this paper, we consider machines $H$ (head) and $T$ (tail), such that $T$ receives inputs from $H$ but not vice-versa. We study both the problem of minimizing $T$ without changing the language of the composition $T\circ H$ (Tail Minimization Problem) and the problem of building $T$ when we are given $H$ and a desired model for the system $M$
(Tail Synthesis Problem). We note that despite the simplicity of this setting, in various cases it is possible to tackle problems in more complex two-component networks using one-way compositions via simple reductions \cite{Wang_Brayton_1993}. We show that,
through polynomial reductions, our findings can be applied to any kind of composition between $H$ and $T$ where
all of $T$'s outputs can be externally observed. \par

The first exact solution to the\textbf{ Tail Minimization Problem} was given by Kim and Newborn
\cite{joonkikimSimplificationSequentialMachines1972}. They compute the smallest $T^\prime$ by minimizing
an incompletely specified (IS) Mealy machine $N$ with size proportional to $2^{|H|}|T|$. 
Minimization of IS machines is a known computationally hard problem \cite{Pfleeger_1973}.
Known algorithms for this task take exponential time in the size of $N$ \cite{Rho_Somenzi_Jacoby,
abelMEMINSATbasedExact2015,
penaNewAlgorithmExact1999},
which cannot be improved under
standard complexity-theoretic assumptions. This yields a doubly exponential complexity for the whole
Kim-Newborn procedure. This is unsatisfactory as deciding whether a
replacement $T^\prime$ for the tail $T$ of a given size exists is an NP problem: given a candidate $T^\prime$, we can compute the composition $T^\prime\circ H$ and check language equivalence with the original system $T\circ H$ in polynomial time. The reason for the blow-up in the complexity
of the Kim-Newborn method lies in a determinization step that results in $N$ being exponentially larger
than $H$. Various non-exact methods have been studied to avoid this determinization step. For instance, Rho and Somenzi \cite{june-kyungrhoDonCareSequences1994} present an heuristic in which a ``summarized''
incompletely specified machine is obtained, and in \cite{huey-yihwangMultilevelLogicOptimization1995}
Wang and Brayton avoid performing exact state minimization and in turn perform optimizations at the net-list logic level. \par

The observation that the Tail Minimization Problem can be solved efficiently with access to an NP oracle yields a straightforward solution via iterative encodings into SAT.
As an improvement, we propose a
modification of the Kim-Newborn procedure and that runs in singly exponential time. To develop our method, we introduce the notion of observation machines (OMs), which can be regarded as IS machines with universal branching. Our algorithm yields a polynomial reduction of the Tail Minimization Problem to the problem
of minimizing an OM.
To carry out this last task, we generalize the
procedure for minimizing IS machines shown in \cite{abelMEMINSATbasedExact2015} to OMs. Preliminary experimental results show that the proposed approach is much more efficient than the naive encoding into SAT on random benchmarks.
\par

We then turn to the \textbf{Tail Synthesis Problem}. Here, we are given $H$ and $M$ and we look for a $T$ such that $T \circ H \equiv M$.
Apart from finding such $T$, we may simply be interested in deciding whether it exists. In this case we say that the instance of the problem is feasible.
This problem is a particular case of
so-called missing component equations \cite{villaUnknownComponentProblem2012},
where arbitrary connections between $H$ and $T$
are allowed. The general approach given in \cite{watanabeMaximumSetPermissible1993a},
is based on the construction of
a deterministic automaton representing 
the ``flexibility'' of $T$, called the E-machine. 
Both the task of deciding whether an equation is feasible and the one of computing a solution in the affirmative case take linear time in the size of the E-machine. However, this automaton
is given by a subset construction, and has size $O(2^{|H|}|M|)$. This yields an
exponential upper bound for the complexity of solving general unknown component equations as well as the complexity of only deciding their feasibility. \par
We show the surprising result that deciding the feasibility of the Tail Synthesis Problem takes polynomial time, but computing an
actual solution $T$ has exponential complexity, matching the bound given by the E-machine approach
\cite{watanabeMaximumSetPermissible1993a}. This follows from the existence of
instances of the synthesis problem where all solutions $T$ have exponential size. We give
a family of such instances constructively. Additionally, we show that it is possible
to represent all solutions of the Tail Synthesis Problem via an OM. This representation avoids any kind of subset construction and hence is exponentially more succinct than the E-machine. \par

% Here, a Mealy machine is a solution of the equation if and only if its traces are accepted by this automaton Both the tasks of deciding whether an equation is feasible and computing a solution
% can be carried out in linear time of the size of the E-machine.
% This automaton  except in some
% cases where determinization can
% be avoided \cite{benedettoModelMatchingFinitestate2001}.
%  \par
% We show the surprising result that while deci

% We observe that the
% determinization step involved in the construction of the E-machine may not be necessary for some applications, 
% as one can build an exponentially more succinct universal automata which represents the same language. We show a surprising
% implication of this fact for the equation problem where the head of a cascade is known and we want to solve for the tail. 
% We show that it is is decidable in polynomial time whether a tail exists that satisfies the equation, improving over the more general solution of \cite{watanabeMaximumSetPermissible1993a} by an exponential factor. However, we also show that the tail may have exponential size and thus \emph{constructing} it may need exponential time, matching the bound of \cite{watanabeMaximumSetPermissible1993a}. 
% \todo{Filip is drawing stuff}
% \todo{The following is not reflected: We give an algorithm for minimizing the driven machine in a cascade DFSM composition.  We show that it can be applied to other network topologies and discuss how to extend it for the rest.1}

\section{Preliminaries}

\paragraph*{General Notation}
We write $[k]$ for $\{0,\dots,k-1\}$, $2^X$ for the power set of $X$, and $X^*$
for the set of finite words of arbitrary length over $X$.
%, i.e.
%$X^*\coloneqq \bigcup_{n=0}^{\infty} X^n$. 
We use overlined variables $\overline{x} = x_0 \dots x_{n-1}$ for words, and write
$\emptyword$ for the empty word.
Given two words 
of the same length 
we define $\langle \overline{x}, \overline{y} \rangle
\coloneqq (x_0,y_0)\dots (x_n, y_n) \in (X\times Y)^*$.

\paragraph*{Mealy Machines}
\label{sec:mealy}
Let $X$ and $Y$ be finite alphabets.
A \textbf{Mealy machine} $M$ from $X$ to $Y$ is a tuple 
$(X, Y, S_M, D_M,$ $\delta_M,\lambda_M, r_M)$, where $S_M$ 
is a finite set of states,
$D_M\subseteq S_M\times X$ is a specification domain,
$\delta_M: D_M \rightarrow S_M$ 
is the next state function, $\lambda_M: D_M \rightarrow Y$ 
is the output function and $r_M\in S_M$ is the initial 
state. We say that an input string $\overline{x}\coloneqq x_0x_1\dots
x_n \in X^*$ is \textbf{defined} at a state
$s_0$ if there are states 
$s_1,\dots, s_{n+1}$ 
satisfying both $(s_i,x_i)\in D_M$ and $s_{i+1}=\delta_M(s_i,x_i)$ 
for all $0\leq i \leq n$. We call the sequence 
$s_0,x_0,y_0,s_1, \dots,  s_n,x_n,y_n,s_{n+1}$,
where $y_i=\lambda_M(s_i,x_i)$,
% $\overline{s}\coloneqq s_0 s_1\dots s_{n+1}$
the \textbf{run} of $M$ on $\overline{x}$
from $s_0$. When $s_0=r_M$, we simply
call this sequence the run of $M$ on $\overline{x}$. 
We write $\Omega_M(s)$ for the set of input sequences
defined at state $s$, and $\Omega_M$ for $\Omega_M(r)$.
We lift $\delta_M$ and $\lambda_M$
to defined input sequences in the natural way. 
We define $\delta_M(s, \emptyword)=s$,
$\lambda_M(s,\emptyword)=\emptyword$ for all $s$. 
Given $\overline{x} \in \Omega_M(s)$, if 
$s^\prime = \delta_M(s, \overline{x})$
and  $(s^\prime,x^\prime)\in D_M$ then
$\delta_M(s,\overline{x} x^\prime)=\delta_M(s^\prime,x^\prime)$ and
$\lambda_M(s,\overline{x}x^\prime) = 
\lambda_M(s,\overline{x})\lambda_M(s^\prime,x^\prime)$. We write $\delta_M(\overline{x})$ and $\lambda_M(\overline{x})$
for $\delta_M(r_M,\overline{x})$ and $\lambda_M(r_M,\overline{x})$ 
respectively. We define $\Out(M)$
as the set of words $\lambda_M(\overline{x})$, for
all $\overline{x}\in \Omega_M$.\par

We say $M$ is \textbf{completely specified}, if $D_M=S_M\times X$.
Otherwise we say that $M$ is \textbf{incompletely specified}. From
now on, we refer to completely specified Mealy machines simply as
Mealy machines, and to incompletely
specified ones as IS Mealy machines. 
For considerations of computational complexity, we consider alphabets to be fixed and the size $|M|$ of a Mealy machine $M$ to be proportional to its number of states. \par

We say that a (completely specified) Mealy machine $N$
\textbf{implements} an IS Mealy machine $M$ with the same input/output
alphabets as $N$ if
$\lambda_N(\overline{x})=
\lambda_M(\overline{x})$
for all 
$\overline{x}\in \Omega_M$.
The problem of minimizing an IS 
Mealy machine $M$ consists of 
finding a minimal implementation for it. It is well known that this
problem is computationally hard, in contrast with the minimization of completely specified Mealy machines. More precisely, we say that $M$ is \textbf{$n$-implementable} if it has some implementation whose size is not greater than $n$. Given an IS machine $M$
and some $n$, deciding whether 
$M$ is $n$-implementable is an NP-complete problem \cite{Pfleeger_1973}.

\paragraph*{Automata Over Finite Words}
\label{sec:safety_automata} 
We consider automata over finite words
where all states 
are accepting. Let $\Sigma$ be a finite alphabet. A \textbf{non-deterministic finite automaton (NFA)} $A$ \textbf{over} $\Sigma$,
is a tuple $(\Sigma, S_A, \Delta_A, r_A)$, 
where $S_A$ is a finite set of states, $\Delta_A: S_A\times \Sigma \rightarrow 2^{S_A}$ 
is the transition function, and $r_A\in S_A$ is the 
initial state.  A \textbf{run} of $A$ on a word $\overline{a}$ is defined as usual. We say that $A$ \textbf{accepts} a word $\overline{a}$
if there is a run of $A$ on $\overline{a}$. 
The \textbf{language} of 
$A$ is the set $\Lcal(A)\subseteq \Sigma^*$ containing the words accepted by $A$. Note that $\Lcal(A)$
is prefix-closed. We lift $\Delta_A$
to words $\overline{a}\in \Sigma^*$ and sets of states $Q\subseteq S_A$
in the natural way. The set
$\Delta_A(s,\overline{a})$ consists
of all $s^\prime$ such that some run of $A$ on $\overline{a}$
from $s$ finishes at $s^\prime$,
and we write $\Delta_A(Q,\overline{a})$ for $\cup_{s\in Q} \Delta_A(s,\overline{a})$. We write $\Delta_A(\overline{a})$ for $\Delta_A(r_A,\overline{a})$.
We say that $A$ is \textbf{deterministic (a DFA)}, if 
$|\Delta(s,a)|\leq 1$ for all $s,a$.
For considerations of computational complexity, we consider the alphabet to be fixed, as before, and the size $|A|$ of an automaton $A$ to be proportional to $|S_A| + \sum_{s,a} |\Delta_A(s,a)|$.

\section{Problem statements}

We define a \textbf{cascade composition of Mealy machines} 
as a system
consisting of two Mealy machines $H$,
the \textbf{head},
and $T$,
the \textbf{tail}, that work in sequential composition as shown in \cref{fig:cascade}). We write $T\circ H$ to refer to this sequential composition. The behaviour such cascade composition can be described via another Mealy machine $M$ resulting from a standard product construction: Set $S_M\coloneqq S_H \times S_T$ and $r_M\coloneqq (r_H,r_T)$. Let 
$s_H\in S_H, s_T\in S_T, x\in X$
and $y\coloneqq \lambda_H(s_H, x)$. We define $\delta_M( (s_H, s_T), x)\coloneqq (s^\prime_H, s^\prime_T)$ where $s^\prime_H\coloneqq\delta_H(s_H,x)$ and
$s^\prime_T\coloneqq 
\delta_T(s_T,y)$, and 
$\lambda_M( (s_H, s_T), x)\coloneqq \lambda_T(s_T,y)$. Given a cascade composition 
$T\circ H$, we say that a Mealy machine $T^\prime$ is a 
\textbf{replacement} for $T$ if $T^\prime \circ H\equiv T\circ H$.
We say that $T$ is \textbf{$n$-replaceable}
in $T\circ H$ if there is a replacement
$T^\prime$ for $T$ with at most $n$
states. 

\begin{figure}[tb]
    \centering
    \begin{subfigure}{0.40\textwidth}
    \includegraphics[width=\textwidth]{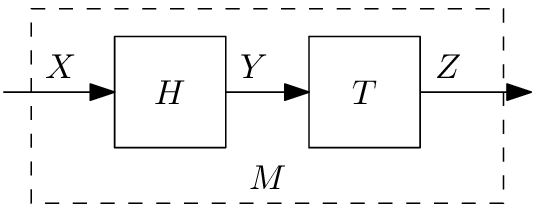}
    \caption{A cascade composition of Mealy machines.}
    \label{fig:cascade}
    \end{subfigure}
    \hfill
    \begin{subfigure}{0.40\textwidth}
    \includegraphics[width=\textwidth]{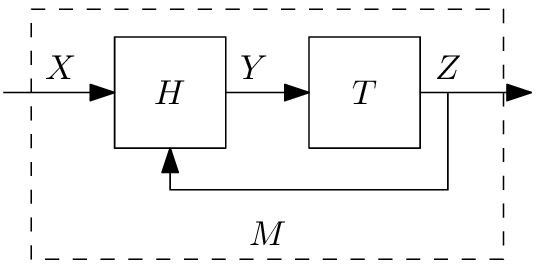}
    \caption{A composition between Mealy machines with two-way communication.}
    \label{fig:general_cascade}
    \end{subfigure}
    \caption{Two different compositions of Mealy machines.}
    
\end{figure} 

We study two problems related to the tail component $T$ of a cascade composition. In the first, both
$H$ and $T$ are given and the goal is to find a minimal replacement $T^\prime$ for $T$ that leaves 
the behaviour of the system unaltered.
In the second, $H$ and $M$ are given instead and one is asked to find $T$ such that $T\circ H\equiv M$. We also consider two related decision problems.

\begin{problem}[Tail Minimization Problem]
\label{prob:minim}
Given a cascade composition $T\circ H$, find a replacement $T^\prime$ for 
$T$ with the minimum amount of states.
\end{problem}

\begin{problem}[Tail Synthesis Problem]
\label{prob:equation}
Given Mealy machines $H$ and $M$ sharing the same input alphabet, construct a Mealy machine $T$ so that $T\circ H\equiv M$.
\end{problem}

\begin{problem}[$n$-Replaceabilty of the Tail]
\label{prob:n_replaceability}
Given a cascade composition $T\circ H$ and a number $n\in \N$, 
decide whether there is a replacement $T^\prime$ for $T$ with at most
$n$ states.
\end{problem}

\begin{problem}[Feasibility of the Tail Synthesis Problem]
\label{prob:feasability}
Given Mealy machines $H$ and $M$ with the same input alphabet, decide whether there exists some Mealy
machine $T$ such that $T\circ H\equiv M$.
\end{problem}

\section{Observation Machines}
\label{sec:ofa_prelim}
We will now define observation machines, which
can be regarded as IS Mealy machines 
\cite{harrisSynthesisFiniteState1998} with 
universal branching. 
This construction allows us to express the 
solutions of the Tail Minimization Problem
%\cref{prob:minim}
% (\cref{sec:flex_aut}) 
and the Tail Synthesis 
Problem %\cref{prob:equation}
% (\cref{sec:tailsynthesis})
while avoiding the 
determinization
steps in \cite{joonkikimSimplificationSequentialMachines1972} and
\cite{watanabeMaximumSetPermissible1993a}.
An \textbf{observation machine (OM)} from $X$ to $Y$
is a tuple $M=(X,Y,S_M,D_M,$ 
$\Delta_M,\lambda_M,r_M)$
defined the same way as a Mealy machine except 
for the 
next-state function $\Delta_M$, which now maps
elements of the specification domain
to sets of states $\Delta_M: D_M \rightarrow 2^{S_M} 
\setminus \{\emptyset\}$. 
A \textbf{run of $M$ over $\overline{x}\in X^*$ starting from $s_0\in S_M$},
is a sequence
$s_0,x_0,y_0,s_1,\dots,s_n,x_n,y_n,s_{n+1}$, where
$(s_i,x_i)\in D_M$, 
$s_{i+1}\in \Delta_M(s_i,x_i)$ and
$y_i=\lambda(s_i,x_i)$ for all $0\leq i\leq n$. We call such run simply
a run of $M$ on $\overline{x}$
when $s_0\coloneqq r_M$. We 
say that a sequence $\overline{x}$
is defined at a state $s$ if there is a run of $M$ on $\overline{x}$
starting from $s$. As with Mealy machines, we put $\Omega_M(s)$ and $\Omega_M$ for the sets
of defined sequences at $s$ and at $r_M$ respectively. 
We say that $M$ is \textbf{consistent} if all runs of $M$ on 
any given defined input sequence $\overline{x}\in \Omega_M$ have the same output. 
Unless otherwise specified, we assume all OMs to be consistent. 
In this case we can lift $\lambda$ to defined
input sequences, witting $\lambda_M(\overline{x})$
for the unique output sequence $\overline{y}$ corresponding to all runs of $M$ over $\overline{x}$.
We also lift the transition
function $\Delta_M$ to defined sequences and sets of states as done previously for NFAs.
Note that when $M$ has no branching, i.e. $|\Delta_M(s,x)|=1$ for all 
$(s,x)\in D_M$, we end up with a construction equivalent to an IS Mealy machine. 
Again, for considerations of computational complexity, we consider alphabets to be fixed, and the size $|M|$ to be proportional to $|S_M| + \sum_{(s,a)\in D_M} |\Delta_M(s,a)|$.
\par
We can use consistent OMs to represent specifications over  Mealy machines. 
We say that a machine $N$ \textbf{implements} a OM $M$ with the
same input/output alphabets as $N$
if $\lambda_N(\overline{x})=\lambda_M(\overline{x})$ for all $\overline{x}\in
\Omega_M$. Note that it is straightforward to test this property in $O(|N||M|)$ time 
via a product construction. 
We say that $M$
is $n$-implementable if it has an
implementation with at most $n$ states. The problem of \textbf{minimizing} an OM $M$
is the one of finding an implementation of it with the minimum number of states. 
\par
Informally, the reason we call the branching ``universal'' is that the i/o language of an OM $M$ is given by an universal automata of the same size, but not by a non-deterministic one. The intuitive argument is that once an input word leaves $\Omega_M$,
all behaviours are allowed. In a way, this means that $M$ accepts the complement of $\Omega_M$, which is given by an NFA.

\section{Optimization of the Tail}

In this section we introduce a novel
solution for the Tail Minimization
Problem. This solution improves over
the state of the art, 
represented by the Kim-Newborn (K-N) method \cite{joonkikimSimplificationSequentialMachines1972}, by avoiding an expensive
determinization step during the process. Two important observations
are the following.

\begin{observation}
\label{obs:npmin}
The $n$-replaceability decision
problem \cref{prob:n_replaceability} is in  NP: Given a candidate $T^\prime$ with $|S_{T^\prime}|\leq n$ it takes polynomial %$O(|S_H||S_T||X|)$ 
time to build Mealy models %$M$ and $M^\prime$ 
for $T\circ H$ and $T\circ H^\prime$,  and to decide whether they are equivalent.
%$\Tr(M)=\Tr(M^\prime)$
\end{observation}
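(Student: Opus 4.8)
The plan is to establish membership in NP by exhibiting a succinct, polynomially checkable certificate for the yes-instances. For a positive instance $(T\circ H, n)$ of \cref{prob:n_replaceability}, the certificate is simply a candidate replacement $T^\prime$, presented explicitly as a Mealy machine with the same input and output alphabets as $T$ and with $|S_{T^\prime}|\leq n$. Before anything else I would observe that we may assume $n < |S_T|$: the tail $T$ is trivially a replacement for itself, so whenever $n\geq |S_T|$ the instance is a yes-instance and can be accepted outright. Hence in the remaining cases the certificate $T^\prime$ has size polynomial in $|T\circ H|$, as the definition of NP requires.

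Given $T^\prime$, the verifier must confirm that $T^\prime \circ H \equiv T\circ H$. First I would apply the product construction that defines a cascade composition to build explicit Mealy machines for $T^\prime\circ H$ and for $T\circ H$; these have $|S_{T^\prime}|\cdot|S_H|$ and $|S_T|\cdot|S_H|$ states respectively, so both are of polynomial size and are computed in polynomial time. Crucially, both compositions are \emph{completely specified}, since $H$, $T$ and $T^\prime$ are.

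The remaining step is to test equivalence of two completely specified Mealy machines, which is a classical polynomial-time task: form their product, restrict to the reachable states, and verify that the two output functions agree on every reachable pair of state and input (equivalently, minimize both machines by partition refinement and compare). Assembling the pieces: every positive instance is accepted when given the certificate $T^\prime$, and conversely any $T^\prime$ accepted by the verifier is by definition a replacement for $T$ with at most $n$ states, so the instance is positive. Therefore \cref{prob:n_replaceability} lies in NP. I do not expect a genuine obstacle in this argument; the only point needing a moment's care is keeping the certificate polynomially bounded, which is precisely what the remark that $T$ replaces itself takes care of.
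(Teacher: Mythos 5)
Your proposal is correct and follows essentially the same route as the paper: the certificate is the candidate $T^\prime$, verified by building the product compositions and checking equivalence of completely specified Mealy machines in polynomial time. The only addition is your (sensible) remark that instances with $n\geq |S_T|$ are trivially positive, which keeps the certificate polynomially bounded; this is a minor refinement of the same argument, not a different approach.
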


\begin{observation}
\label{lemm:main}
A Mealy machine $T^\prime$
is a replacement for $T$ if and only if 
$\lambda_T(\overline{y})=
\lambda_{T^\prime}(\overline{y})$
for all $\overline{y}\in \mathrm{Out}(H)$
\end{observation}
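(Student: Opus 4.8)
The plan is to reduce the statement to a single structural fact about cascade compositions: running $T\circ H$ on an input word $\overline{x}$ produces exactly the output that $T$ alone produces on the word $\lambda_H(\overline{x})$. Concretely, I would first establish, by induction on $|\overline{x}|$, the pair of identities
\[
\delta_{T\circ H}(\overline{x}) = \bigl(\delta_H(\overline{x}),\, \delta_T(\lambda_H(\overline{x}))\bigr)
\qquad\text{and}\qquad
\lambda_{T\circ H}(\overline{x}) = \lambda_T(\lambda_H(\overline{x})),
\]
valid for all $\overline{x}\in X^*$. All terms here are defined because $H$ and $T$ are completely specified, so $\Omega_H=X^*$, $\Omega_T=Y^*$, and consequently $T\circ H$ is completely specified with $\Omega_{T\circ H}=X^*$. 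The base case $\overline{x}=\emptyword$ is immediate from the conventions $\delta_M(s,\emptyword)=s$ and $\lambda_M(s,\emptyword)=\emptyword$. For the inductive step, write $\overline{x}=\overline{x}'x$, put $(s_H,s_T)\coloneqq\delta_{T\circ H}(\overline{x}')$, use the induction hypothesis to read off $s_H=\delta_H(\overline{x}')$ and $s_T=\delta_T(\lambda_H(\overline{x}'))$, set $y\coloneqq\lambda_H(s_H,x)$, and then expand both sides using the product-construction definitions of $\delta_{T\circ H},\lambda_{T\circ H}$ together with the compositional lifting rules for $\delta_H,\lambda_H,\delta_T,\lambda_T$; a short computation closes the step.

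Granting the lemma, the Observation is essentially bookkeeping. By definition $T'$ is a replacement for $T$ iff $T'\circ H\equiv T\circ H$, which --- both compositions being completely specified --- means $\lambda_{T'\circ H}(\overline{x})=\lambda_{T\circ H}(\overline{x})$ for every $\overline{x}\in X^*$. Applying the lemma to each side rewrites this as $\lambda_{T'}(\lambda_H(\overline{x}))=\lambda_T(\lambda_H(\overline{x}))$ for every $\overline{x}\in X^*$. Since $\Out(H)=\{\lambda_H(\overline{x}):\overline{x}\in X^*\}$ by definition, this is exactly the condition that $\lambda_T(\overline{y})=\lambda_{T'}(\overline{y})$ for all $\overline{y}\in\Out(H)$: the forward implication is clear since each such $\overline{y}$ equals $\lambda_H(\overline{x})$ for some $\overline{x}$, and the converse holds because $\lambda_H(\overline{x})\in\Out(H)$ for every $\overline{x}$.

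I do not anticipate a genuine obstacle; the only points that need care are (i) carrying the state component through the induction even though it does not appear in the final statement --- it is needed to make the inductive step on outputs go through --- and (ii) invoking complete specification of $H$ and $T$ to ensure every quantity is defined, so that equivalence of the compositions can be phrased over all of $X^*$ rather than over a proper subset $\Omega_{T\circ H}$.
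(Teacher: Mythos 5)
Your proof is correct, and it is precisely the argument the paper leaves implicit by stating this as an unproved Observation: the inductive identity $\lambda_{T\circ H}(\overline{x})=\lambda_T(\lambda_H(\overline{x}))$ plus the fact that $\Out(H)$ is exactly the range of $\lambda_H$ over $X^*$ (all machines here being completely specified). Nothing further is needed.
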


Because of  \cref{obs:npmin},
there are
exponential-time algorithms for \cref{prob:minim} and
\cref{prob:n_replaceability}:
there is a straightforward (``naive'') polynomial reduction of the $n$-replaceability problem into a satisfiability problem along the lines of bounded synthesis \cite{DBLP:journals/sttt/FinkbeinerS13}. This encoding can be used to optimize the tail of a cascade composition by
finding the minimum $n$ for which the resulting CNF formula is satisfiable. The exponential complexity of this procedure contrasts with the double-exponential complexity of
the K-N algorithm. This reasoning also applies to more general
networks of Mealy machines 
\cite[Chapter 6]{harrisSynthesisFiniteState1998}, implying that the approach for component minimization
based on optimizing the ``$E$-machine'' 
\cite{harrisSynthesisFiniteState1998, watanabeMaximumSetPermissible1993a}
is not optimal in theory, as it takes doubly exponential time. \par
In the following, we study the problem in more detail and present an novel approach that we will compare against the naive encoding in the experimental section.

\begin{figure}[tb]
\vspace{0.2cm}
    \centering
\begin{subfigure}{0.40\textwidth}
    \includegraphics[width=\textwidth]{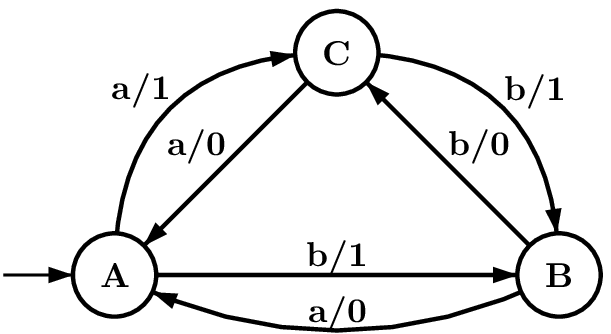}
    \vspace{0.2cm}
    \caption{A Mealy machine $H$ from
    $X\coloneqq \{a,b\}$ to $Y\coloneqq 
    \{0,1\}$.}
    \label{fig:h}
\end{subfigure}
\hfill
\begin{subfigure}{0.40\textwidth}
    \includegraphics[width=\textwidth]{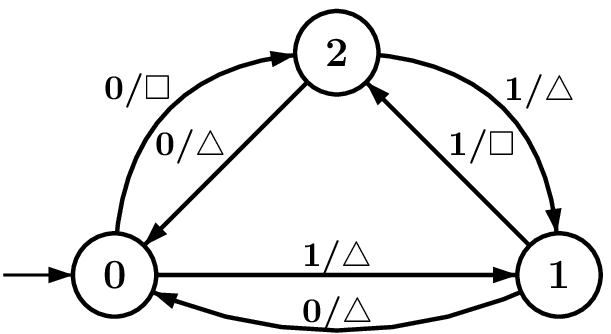}
        \vspace{0.2cm}
    \caption{A Mealy machine $T$ from
    $Y\coloneqq 
    \{0,1\}$ to $Z\coloneqq \{\triangle,\square\}$.
    }
    \label{fig:T}
\end{subfigure}

% \captionsetup{justification=centering}
\caption{The head $H$ and the tail $T$ of a cascade composition.}
\label{fig:example}
\end{figure}

\subsection{Proposed Minimization Algorithm}
\label{sec:min_alg}
% We propose an exact method to optimize the tail in a cascade composition that manages to avoid the double exponential complexity of the K-N procedure. 
% The consequence of skipping this step
% is that instead of minimizing an IS
% machine in the final 
% step, we have to find a minimal implementation of an OM. \par

We give an overview of
our minimization method here. The algorithm is divided in three steps: 
(1) We compute an NFA $A$ which accepts the language $\mathrm{Out}(H)$ 
% (\Cref{sec:inv_aut})
, (2)
using $A$ and $T$ we build an OM $M$ whose set of implementations is precisely the set of replacements for $T$ 
% (\Cref{sec:flex_aut})
. By \cref{lemm:main},
this is ensured by
$\Omega_M=\Out(H)$
and $\lambda_M(\overline{y})=
\lambda_T(\overline{y})$ for
all $\overline{y}\in \Omega_M$.
Lastly, (3) we find an implementation $T^\prime$ of
$M$ with the minimum number of states 
% (\Crefrange{sec:covering}{sec:mini_end})
. 
This machine $T^\prime$ is a minimal replacement for $T$. \par
This algorithm follows the K-N procedure but skips an expensive determinization step. Indeed, the difference is that in
the K-N method the NFA $A$ 
obtained in (1) is determinized 
before performing (2). This removes the branching from $M$,
making it an IS machine, but it yields $|M|=O(2^{|H|}|T|)$ rather
than $|M|=O(|H||T|)$ in our method. Step (3), the last one, is responsible of the overall complexity of our algorithm as well as the K-N one, and takes $2^{|M|^{O(1)}}$ time. This
makes the total time costs
of our procedure and the K-N one
$2^{(|H||T|)^{O(1)}}$ and
$2^{(2^{|H|}|T|)^{O(1)}}$, respectively. 
We illustrate our method in \cref{fig:example2} and \cref{fig:example3},
where we minimize the tail of the cascade
composition shown in \cref{fig:example}.

\paragraph*{The Image Automaton}
% \label{sec:inv_aut}
Let $H$ be a Mealy machine from $X$ to $Y$.
In this section we describe how to obtain 
an NFA whose language is  
$\Out(H)$ \cite{hennie1968finite,joonkikimSimplificationSequentialMachines1972}. The \textbf{image automaton} of $H$ (sometimes called the inverse automaton), written
$\Img(H)$, is the NFA over $Y$ defined as follows: Let
$S_{\Img(H)}\coloneqq S_H$ and $r_{\Img(H)}\coloneqq r_H$. 
We set 
$\Delta_{\Img(H)}(s_1,y)\coloneqq\{
s_2 \in S_H \, \mid \, \exists x\in X \, \text{s.t. } \lambda(s_1,x)=y,\, \delta(s_1,x)=s_2 \, \}$. 
It holds that
$\Lcal(\Img(H))=\mathrm{Out}(H)$.
Essentially, to obtain $\Img(H)$ one deletes 
the input labels from $H$'s transitions, as shown
in \cref{fig:inv}. The time and space
complexity of this construction is $O(|H|)$. \par

% The size of the image automaton satisfies $|S_{\Img(M)}^+|=|S_M|$, 
% and the following bound can be given for the number of transitions in
% $_{\Img(M)}$: 
% \begin{equation}
% \label{eqn:bound_inv}
%     \sum_{y\in Y} |\delta_{\Img(M)}(s_{\Img(M)},y)|\leq |X|+|Y| \quad \text{for all } s_{\Img(M)}\in S_{\Img(M)}
% \end{equation}

\begin{figure}[tb]
    \centering
    \begin{subfigure}{0.40\textwidth}
    \includegraphics[width=\textwidth]{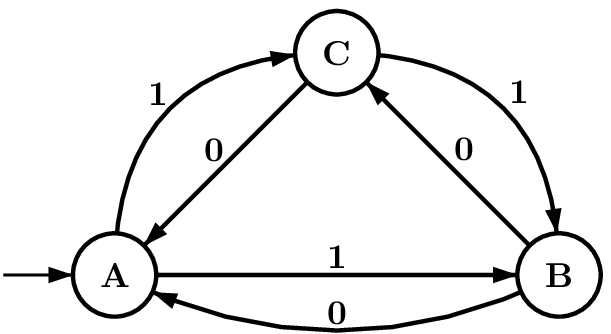}
    \vspace{0.3cm}
    \caption{The image automaton $\Img(H)$ of the machine $H$
    in \cref{fig:h}.}
    \label{fig:inv}
    \end{subfigure}
    \hfill
     \begin{subfigure}{0.45\textwidth}
    \includegraphics[width=\textwidth]{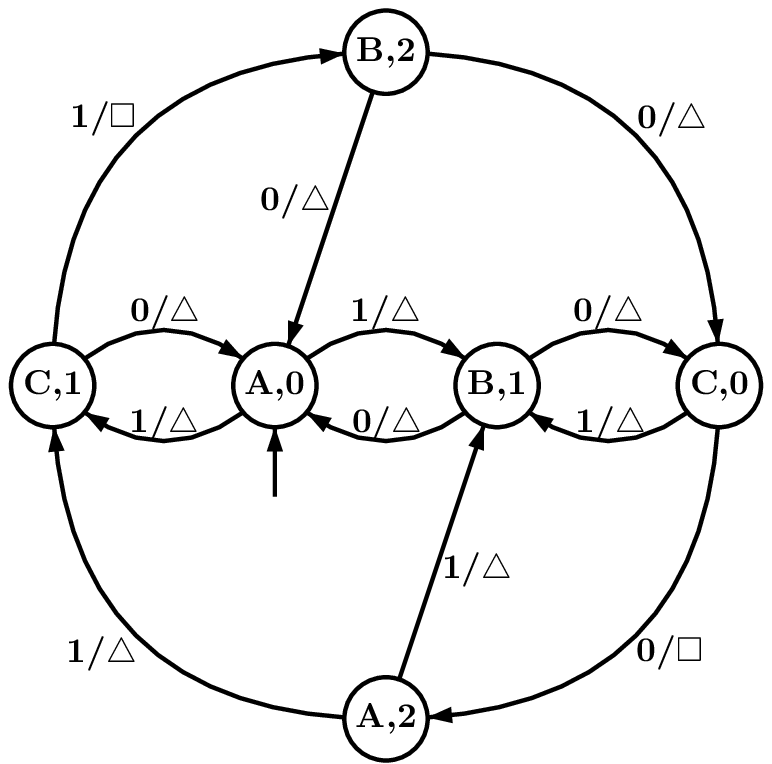}
    \vspace{0.3cm}
    \caption{The restriction $M\coloneqq 
    T|_{\Img(H)}$ of the Machine $T$ in \cref{fig:T}
    to the automata $A$ in \cref{fig:inv}.}
    \label{fig:flex}
    \end{subfigure}
    \caption{First and second step of the minimization of $T$
    in \cref{fig:example}}
    \label{fig:example2}
\end{figure}

\paragraph*{The Restriction Machine}
% \label{sec:flex_aut}
Let $A$ be an NFA over $Y$,
and $T$ a machine from $Y$ to $Z$.
In this section our goal is to build an
OM $M$ whose set of defined sequences
$\Omega_M$ is precisely $\Lcal(A)$ and
that satisfies $\lambda_M(\overline{y})=\lambda_T(\overline{y})$
for all $\overline{y}\in \Omega_M$.
The \textbf{restriction of } $T$
\textbf{ to } $A$, denoted by
$T|_A$, is the OM $M$ from $Y$ to $Z$
defined as follows. Let $S_M \coloneqq S_T\times S_A$ and $r_M\coloneqq (r_T,r_A)$. Given a state 
$s_M\coloneqq(s_T, s_A)\in 
S_M$ and an input $y\in Y$ there are two 
possibilities: (1) $\Delta_A(s_A,y)\neq \emptyset$. In
this case we mark the transition as defined
$(s_M, y)\in D_M$, and set 
$
\Delta_M(s_M, y)=
\{
(s^\prime_T,s^\prime_A) \, \mid \, 
s^\prime_T=\delta_T(s_T,y), s^\prime_A\in 
\Delta_A(s_A,y)
\}$.
Alternatively, (2) $\Delta_A(s_A,y) = \emptyset$.
Here we just mark the transition as undefined
$(s_M, y)\notin D_M$. 
It is direct to see 
that both $\Omega_M=\Lcal(A)$ and
$\lambda_M(\overline{y})=\lambda_T(\overline{y})$
for all $\overline{y}\in \Omega_M$. 
\par

An example is given in \cref{fig:flex}. This product construction generalizes the one in the K-N algorithm: when $A$ is deterministic,
so is $M$, yielding an IS Mealy machine. The construction of $M\coloneqq T|_A$ can be performed in $O(|T||A|)$ time. In the case where $A\coloneqq \Img(H)$ results from the head $H$ of a cascade, we can substitute $|A|=O(|H|)$.

% In particular, if $A\coloneqq \Img(H)$
% and $A^\prime$ is the determinization of $A$,
% then $T|_{A^\prime}$ is the ISFSM obtained in the K-N method.
% In effect, we use universality to avoid the explicit negation
% and thus the exponential cost in the K-N algorithm.

% \begin{theorem} 
% \label{thm:flex_automaton}
% Let $M \coloneqq T|_A$ as above. 
% Then
% \end{theorem}
% \begin{proof}
% The fact that $\Omega_M=\Lcal(A)$ follows directly
% from an inductive argument. For the second part of the statement, let $\overline{y}\coloneqq 
% y_0\dots y_{n-1} \in \Omega_M$, and let
% $r_M=s_M^0, y_0, z_0, s_M^1, \dots, s_M^{n-1}, 
% y_{n-1}, z_{n-1}, s_M^n$ be a run of $M$ over 
% $\overline{y}$. Suppose $s_M^i=(s_T^i, s_A^i)$
% for all $0\leq i \leq n$. Then
% $r_T=s_T^0, y_0, z_0, s_T^1, \dots, s_T^{n-1}, 
% y_{n-1}, z_{n-1}, s_T^n$ is the run of $T$ over
% $\overline{y}$, by construction of $M$. This shows 
% that $M$ is consistent and that
% $\lambda_M(\overline{y})=\lambda_T(\overline{y})$.
% \end{proof}

\paragraph*{Reduction to a Covering Problem}
\label{sec:covering}
 
Let $M$ be an OM from $Y$ to $Z$. Our
objective is to find a minimal implementation of $M$. In order to do this, we generalize the theory
of \cite{grasselliMethodMinimizingNumber1965} for minimization of IS machines. The basic idea is that
we can define a compatibility relation
$\sim$ over $S_M$ and use it to reduce the task to a covering problem over $S_M$. Two states 
$s_1,s_2\in S_M$ are \textbf{compatible},
written $s_1\sim s_2$, if $\lambda_M(s_1,\overline{y})=\lambda_M(s_2,\overline{y})$ for all $\overline{y}\in \Omega_M(s_1)\cap
\Omega_M(s_2)$. We note that this
relation between states is symmetric and reflexive, but not necessarily transitive.
A set $Q\subseteq S_M$ is called
a \textbf{compatible} if all its states are pairwise
compatible. We say that $Q$ 
is \textbf{incompatible at  depth} $k$
if $k$ is the length of the shortest word $\overline{y}\in Y^*$ such that
$\overline{y}$ is defined for two states
$s_1,s_2\in Q$, and $\lambda_M(s_1,\overline{y})\neq
\lambda_M(s_2,\overline{y})$. A convenient characterization is as follows:

\begin{lemma}
\label{lem:compat_2}
Let $Q\subseteq S_M$. The following statements hold: 
(1) $Q$ is incompatible at depth $1$ if and only if for some $s_1,s_2\in Q$ there is a defined input
$y\in Y$
with $\lambda_M(s_1,y)\neq \lambda_M(s_2,y)$.
(2) If $Q$ is incompatible at depth $i>1$
then $\Delta_M(Q,y)$ is incompatible at depth $i-1$ for some $y\in Y$.
\end{lemma}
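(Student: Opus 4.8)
## Proof plan for Lemma \ref{lem:compat_2}

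My plan is to unwind the definition of ``incompatible at depth $k$'' directly in both parts, using the lifting of $\lambda_M$ and $\Delta_M$ to words and sets of states. Recall a set $Q$ is incompatible at depth $k$ when $k$ is the \emph{minimal} length of a word $\overline{y}$ that is defined at two states $s_1,s_2\in Q$ with $\lambda_M(s_1,\overline{y})\neq\lambda_M(s_2,\overline{y})$; in particular such a word of length $k$ exists, and no shorter one does.

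For part (1), the plan is just to observe that a word $\overline{y}$ of length $1$ is a single letter $y\in Y$, and ``defined at both $s_1$ and $s_2$'' means exactly $(s_1,y),(s_2,y)\in D_M$, i.e. $y$ is a defined input at each of them. The condition $\lambda_M(s_1,y)\neq\lambda_M(s_2,y)$ is then literally the right-hand side of the stated equivalence. The only subtlety to mention is that depth $1$ also requires that \emph{no} word of length $0$ witnesses incompatibility, but the only word of length $0$ is $\emptyword$, for which $\lambda_M(s,\emptyword)=\emptyword$ for every $s$, so no such witness can ever exist; hence the minimality clause in the definition of depth $1$ is automatic, and the equivalence holds.

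For part (2), suppose $Q$ is incompatible at depth $i>1$, witnessed by $s_1,s_2\in Q$ and a word $\overline{y}=y\,\overline{y}'$ of length $i$ (so $\overline{y}'$ has length $i-1$), defined at both $s_1$ and $s_2$, with $\lambda_M(s_1,\overline{y})\neq\lambda_M(s_2,\overline{y})$. Since $\overline{y}$ is defined at $s_j$ there is a run, whose first step uses $(s_j,y)\in D_M$ and goes to some $s_j'\in\Delta_M(s_j,y)$ at which $\overline{y}'$ is then defined; set $Q'=\Delta_M(Q,y)$, so $s_1',s_2'\in Q'$. I would then use the consistency of $M$ together with the decomposition $\lambda_M(s_j,\overline{y})=\lambda_M(s_j,y)\,\lambda_M(s_j',\overline{y}')$: the first letters $\lambda_M(s_1,y)$ and $\lambda_M(s_2,y)$ are both $y$'s image and, by consistency applied along $Q$, either they already differ — but that would make $Q$ incompatible at depth $1<i$, contradicting minimality — or they agree, in which case the difference must come from the tails, giving $\lambda_M(s_1',\overline{y}')\neq\lambda_M(s_2',\overline{y}')$. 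Thus $Q'$ is incompatible at some depth $\le i-1$. To get exactly $i-1$, I argue by minimality in $Q$: if some $Q'$ were incompatible at depth $j<i-1$, prepending $y$ to the witnessing word (which is legitimate since $Q'=\Delta_M(Q,y)$, so every state of $Q'$ is reached from some state of $Q$ by reading $y$) would yield a word of length $j+1<i$ witnessing incompatibility of $Q$, again contradicting minimality. Hence $Q'$ is incompatible at depth exactly $i-1$.

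The main obstacle I anticipate is bookkeeping around the set-lifted transition function: when I say ``$\Delta_M(Q,y)$ is incompatible at depth $i-1$'', I must be careful that the two states $s_1',s_2'$ I produce genuinely lie in $\Delta_M(Q,y)$ (they do, as successors of $s_1,s_2\in Q$ under $y$), and conversely that any incompatibility witness \emph{in} $\Delta_M(Q,y)$ lifts back to one in $Q$ by prepending $y$ — this uses that each state of $\Delta_M(Q,y)$ has \emph{some} $Q$-predecessor via $y$, which is exactly the definition of the lifted $\Delta_M$. The other point requiring a light touch is invoking consistency of $M$ to split the output word coordinatewise without ambiguity; since we assume all OMs consistent, $\lambda_M(\overline{y})$ is well defined along any run, and the first-letter comparison argument goes through cleanly.
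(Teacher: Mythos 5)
Your proposal is correct, and it supplies exactly the routine unwinding of the definitions that the paper leaves implicit: Lemma~\ref{lem:compat_2} is stated there without proof, and your argument (length-$1$ words are single defined inputs, $\emptyword$ can never witness incompatibility, and splitting a minimal witness $y\,\overline{y}'$ into its first step and tail) is the intended one. Your extra care in showing the depth of $\Delta_M(Q,y)$ is \emph{exactly} $i-1$ --- by lifting any shorter witness in $\Delta_M(Q,y)$ back to $Q$ via a $y$-predecessor, contradicting minimality --- is a genuine point the terse statement needs and you handle it correctly.
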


\Cref{lem:compat_2} gives 
a straightforward way to compute the compatibility relation $\sim$ over $S_M$. We begin by finding
all pairs $s_1,s_2$ incompatible at 
depth $1$. Afterwards we propagate
the incompatible pairs backwards: 
if $s_1\nsim s_2$ and $s_1\in \Delta_M(s_1^\prime,y), s_2\in \Delta_M(s_2^\prime,y)$, then $s_1^\prime\nsim s_2^\prime$. This process can be carried out in $O(|M|^2)$ time (see \cref{ann:graph} for a reduction to Horn SAT). When $M=T|_{\Img(H)}$, it holds
$|M|=O(|H||T|)$, and the
required time to compute
the $\sim$ relation over $B$ is $O(|H|^2|T|^2)$. \par

Let $F\subseteq 2^S_M$ be a family where all $C\in F$ are compatibles. 
We call $F$ a \textbf{closed cover of compatibles over} $M$ if the following
are satisfied: (1) $r_M\in C$ for some $C\in F$, and (2)
for any $C\in F$ and $y \in Y$ there is some $C^\prime \in F$ such that
$\Delta(C, y)\subseteq C^\prime$. The following theorem implies
that the problem of 
finding a minimal implementation for  $M$ is polynomially equivalent
to the problem of finding a closed cover of compatibles over it with the
minimum size. This equivalence is illustrated in \cref{fig:graph}.

\begin{theorem} 
\label{thm:cover_reduction}
Let $M$ be an OM from $Y$ to $Z$. 
Let $|M|\coloneqq \sum_{s,y} |\Delta_B(s,y)|$ 
be the number of transitions
in $M$. The following two statements hold:
(1) Let $N$ be an implementation of $M$. Then it
is possible to build a closed cover of compatibles $F$ over $M$ with 
$|F|\leq |S_N|$ in $O(|M||N|)$ time. (2) Let 
$F$ be a closed cover of compatibles over $M$. Then it is possible to build an 
implementation $N$ of $M$ with $|S_M|=|F|$ in $O(|M||F|)$ time.
\end{theorem}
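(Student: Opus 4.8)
The plan is to prove the two directions of \Cref{thm:cover_reduction} separately, both by explicit construction, mirroring the classical IS-machine argument of \cite{grasselliMethodMinimizingNumber1965} but carefully tracking the universal branching of the OM.

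\medskip

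\noindent\textbf{From an implementation to a closed cover.}
Suppose $N$ implements $M$. First I would define, for each state $p \in S_N$, the set
\[
C_p \;\coloneqq\; \{\, s \in S_M \;\mid\; \exists\, \overline{y}\in\Omega_M \text{ with } \delta_M(\overline y)\ni s \text{ (as a run endpoint) and } \delta_N(\overline y)=p \,\}.
\]
In words, $C_p$ collects all OM-states reachable (along some run) by an input word that drives $N$ into $p$. I would take $F \coloneqq \{\, C_p \mid p \in S_N,\ C_p\neq\emptyset\,\}$, so immediately $|F|\le |S_N|$. The three things to check are: (i) each $C_p$ is a compatible; (ii) $r_M \in C_{r_N}$ (trivial, via the empty word); (iii) closure: for $C_p$ and $y\in Y$, we have $\Delta_M(C_p,y)\subseteq C_{p'}$ where $p' = \delta_N(p,y)$ (this is fine when the transition is defined for some $s\in C_p$; if it is defined for none, $\Delta_M(C_p,y)=\emptyset$ and any $C'$ works). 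For (i): if $s_1,s_2\in C_p$ are witnessed by words $\overline y_1,\overline y_2$, and $\overline w\in\Omega_M(s_1)\cap\Omega_M(s_2)$, then both $\overline y_1\overline w$ and $\overline y_2\overline w$ lie in $\Omega_M$, so consistency of $M$ forces $\lambda_M(\overline y_i\overline w)$ to be the unique output; since $N$ implements $M$ and is in state $p$ after both $\overline y_1$ and $\overline y_2$, the $\overline w$-suffix of $\lambda_N$ agrees, hence $\lambda_M(s_1,\overline w)=\lambda_M(s_2,\overline w)$. The running-time bound $O(|M||N|)$ comes from computing the reachable product of $M$ and $N$ by a BFS over $S_M\times S_N$.

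\medskip

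\noindent\textbf{From a closed cover to an implementation.}
Given a closed cover $F$ of compatibles, I build $N$ with $S_N\coloneqq F$. Pick $C_0\in F$ with $r_M\in C_0$ as the initial state. For each $C\in F$ and $y\in Y$: if $y$ is defined at some $s\in C$, choose (arbitrarily, but fixed) a $C'\in F$ with $\Delta_M(C,y)\subseteq C'$ and set $\delta_N(C,y)\coloneqq C'$; for the output, pick any $s\in C$ with $(s,y)\in D_M$ and set $\lambda_N(C,y)\coloneqq\lambda_M(s,y)$ --- this is well-defined because $C$ is a compatible, so all such $s$ give the same one-letter output (by the depth-$1$ clause of \Cref{lem:compat_2}). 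If $y$ is defined at no state of $C$, set $\delta_N(C,y)$ arbitrarily (it will never be exercised along a word in $\Omega_M$). The correctness claim $\lambda_N(\overline y)=\lambda_M(\overline y)$ for all $\overline y\in\Omega_M$ I would prove by induction on $|\overline y|$, maintaining the invariant that after reading a prefix $\overline u$ of $\overline y$, the state $\delta_N(\overline u)$ is some $C\in F$ with $\delta_M(\overline u)\subseteq C$ (where $\delta_M(\overline u)=\Delta_M(r_M,\overline u)$ is the set of run-endpoints); the closure property propagates this invariant, and at each step every run-endpoint of $M$ lies in $C$, so the next input letter of $\overline y$ --- being defined at the true OM-run --- is defined at some member of $C$, and the output chosen for $N$ matches by compatibility.

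\medskip

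\noindent\textbf{Main obstacle.}
The delicate point is the interaction between universal branching and consistency in the correctness induction: I must ensure that the ``arbitrary'' choices made when a letter is undefined at $C$ never corrupt behaviour on $\Omega_M$, and that when a letter \emph{is} exercised, the set of \emph{all} $M$-run-endpoints (not just one) is contained in the chosen successor $C'$ --- this is exactly what forces the use of $\Delta_M(C,y)$ (the union over all $s\in C$) rather than a single-state successor in the definition of a closed cover, and it is where the OM argument genuinely departs from the deterministic IS case. Verifying that compatibility of $C$ plus the invariant $\delta_M(\overline u)\subseteq C$ suffices to pin down the output after \emph{every} continuing suffix (and not merely after the one realized letter) is the crux; it rests on the observation that any two run-endpoints in $C$ are compatible and that $\Omega_M(s)$ for $s\in\delta_M(\overline u)$ always contains the remaining suffix of $\overline y$.
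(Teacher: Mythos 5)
Your proposal is correct and takes essentially the same approach as the paper: part (1) uses the identical sets of $M$-run-endpoints reachable under input words that drive $N$ into a given state, and part (2) builds the same quotient-style machine on $F$ with arbitrarily chosen successors and outputs, justified by the same run-containment invariant (the paper fixes a single run of $M$ in its appendix argument, whereas you carry all run-endpoints at once, but this is only a presentational difference).
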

\begin{proof} 
We prove (1) and (2) separately.
\textbf{(1): } Let $N$
be an implementation of $M$.
For each $s_N\in S_N$ we define the set $Q(s_N)\subseteq S_M$ as follows:
$Q(s_N)=\{ \, s_M\in S_M \, \mid \, \exists \overline{y}
\in \Omega_M \text{ s.t. } \delta_N(\overline{y})=s_N, \,
s_M \in \Delta_M(\overline{y})\, \}$.
It holds that each $Q(s_N)$ is a compatible, 
as the output of each state $s_M\in Q(s_N)$ over a defined
input sequence has to coincide with the output of $s_N$
in $N$.
Moreover, $r_M\in Q(r_N)$, and for any $s_N\in S_N$, $y\in Y$ it holds
that $\Delta_M(Q(s_N),y)\subseteq Q(\delta_M(s_N,y))$. Thus, the family 
$F\coloneqq\{Q(s_N) \, \mid \, s_N\in S_N \}$ is a closed cover of compatibles over $N$, and $|F|\leq |S_N|$.
Note that $|F|$ may be strictly lesser
than $|S_N|$, as for some $s^1_N, s^2_N\in S_M$ it may happen that 
$Q(s^1_M)=Q(s^2_M)$. The sets $Q(s_N)$, 
and with them the family $F$, can be 
derived from a product construction between $M$ and $N$
which takes $O(|M||N|)$ time.
\textbf{(2):} Let $F$ be a closed cover of compatibles over $M$. 
We build an implementation $N$ of $M$:
Set $S_N\coloneqq F$. Thus, each state in $N$ corresponds to compatible $C\in F$.
We define $r_N$ as an arbitrary $C\in F$
satisfying $r_M\in C$. 
Let $C\in F$ and $y\in Y$.
We define $\delta_N(C,y)$ as an arbitrary
$C^\prime \in F$ such that $\Delta_M(C,y)\subset C^\prime$. To define $\lambda_N(C,y)$ we
take into account two possibilities: (1) 
$(s_M,y)\in D_M$ for some $s_M\in C$. Then we
set $\lambda_N(C,y)=\lambda_M(s_M,y)$. 
Note that $\lambda_N(C,y)$ is independent of the 
particular choice of $s_M$, as $M$ being consistent
implies that all choices yield the same output. 
(2) $(s_M,y)\notin D_M$ for all $s_M\in C$. In this
case put an arbitrary output for $\lambda_N(C,y)$. 
It can be checked that $N$
is indeed an implementation of $M$ 
(\cref{ann:covering}). The Mealy machine $N$ given by this construction satisfies $|S_N|=|F|$,
and can be built in $O(|M||N|)$ time and space. 
\end{proof}

\begin{figure}[tbh]
    \centering
    \begin{subfigure}{0.40\textwidth}
    \includegraphics[width=\textwidth]{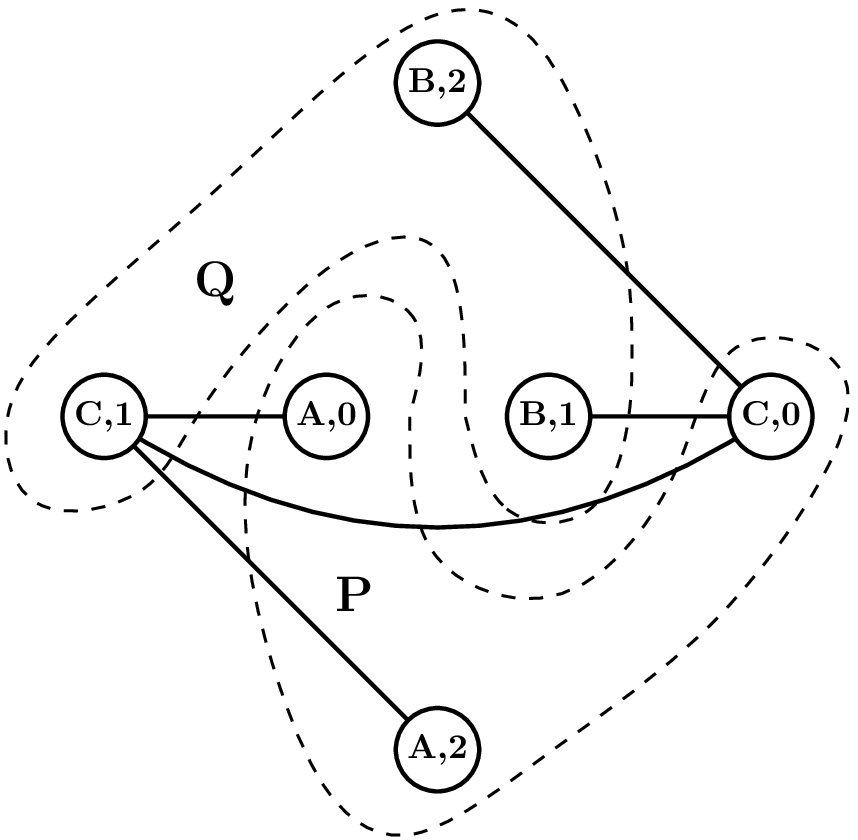}
    \vspace{0.3cm}
    \caption{The relation of the OM $M$
    in \cref{fig:flex}, together with a closed cover of 
    compatibles $P$ and $Q$}
    \label{fig:graph}
    \end{subfigure}
    \hfill
     \begin{subfigure}{0.50 \textwidth}
    \includegraphics[width=\textwidth]{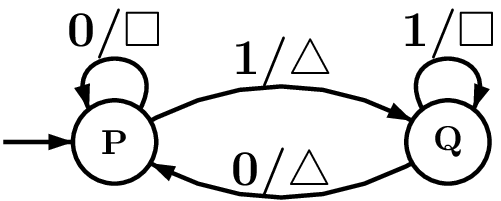}
    \vspace{0.3cm}
    \caption{A Mealy machine $T^\prime$ corresponding to
    the cover in \cref{fig:graph}.}
    \label{fig:mini}
    \end{subfigure}
    \caption{Final step of the minimization of $T$
    in \cref{fig:example}. The machine $T^\prime$ in
    \cref{fig:mini} is a minimal replacement for $T$.}
    \label{fig:example3}
\end{figure}

% \subsubsection{The Incompatibility Graph}
% \label{sec:graph}

% Let $M$ be an OM from $Y$ to $Z$. 
% In this section we give an overview on
% how to compute t

\paragraph*{Obtaining a Minimal Replacement}
% \label{sec:mini_end}

% \subsubsection{CNF Encoding}
% \label{sec:encoding}
% \todo[inline]{This can also be shortened and/or part of it can be turned into an appendix}
Let $M$ be an OM from $Y$ to $Z$. Further 
suppose that $M=T|_{\Img(H)}$ for some machines $T,H$. Replacements of $T$ in
$T\circ H$ are precisely the implementations of $M$. Thus, solving the Tail Minimization Problem amounts to finding a minimal implementation of $M$. As shown in the previous section,
this is equivalent to finding a minimal closed
cover of compatibles over $M$. This reduction
has been widely employed in the study of the analogous minimization problem for IS Mealy machines \cite{Rho_Somenzi_Jacoby,
abelMEMINSATbasedExact2015,
Kam_Villa_Brayton_Sangiovanni-Vincentelli_1994}. 
We propose an adaptation of the method in \cite{abelMEMINSATbasedExact2015} as a tentative approach for finding a minimal implementation of $M$. \par
Given a bound $n$, we reduce the problem of
finding a closed cover $\mathcal{F}$ of $n$ compatibles over $M$ to a SAT instance. The CNF encoding follows closely the one in \cite{abelMEMINSATbasedExact2015},
and the details can be found at \cref{ann:minim_cnf}.
As them, we also compute in advance a so-called partial solution. This is clique of pairwise incompatible states $\mathit{Cl}\subseteq S_M$ obtained via a greedy algorithm. Each state
in $\mathit{Cl}$ must belong to a different
compatible in $\mathcal{F}$, so the clique can be used for adding symmetry breaking predicates to the encoding and thus to reduce solving times. \par
In order to obtain the minimal replacement for $T$ in $T\circ H$ we look for the minimum $n$
that yields a satisfiable CNF encoding. It is clear that such $n$ must lie in the interval $[|\Cl|,|S_T|]$. In particular, when $|\Cl|=|S_T|$, the machine $T$ is already optimal and no encoding is needed. Otherwise, any searching strategy
for $n$ in $[|\Cl|,|S_T|]$ may be employed.
In our case, we simply employ linear search from $|\Cl|$ upwards, which is expected to perform well when $|\Cl|$ is a good estimate for the optimal $n$, as is the case
in \cite{abelMEMINSATbasedExact2015}.

\subsection{Complexity of the Tail Minimization Problem}

As evidenced in \cref{obs:npmin}, deciding
whether the tail in a cascade composition is $n$-replaceable is an NP problem. For completeness sake we show that this result is tight, meaning that the problem is NP-hard as well. To the best of our knowledge,
this complexity result has not been shown elsewhere.

\begin{theorem}
Deciding whether the tail in a 
cascade composition is $n$-replaceable is an NP-hard problem.
\end{theorem}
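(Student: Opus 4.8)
The plan is to reduce from the minimization of incompletely specified Mealy machines, which is NP-hard by \cite{Pfleeger_1973}: deciding, for an IS Mealy machine $M$ from $Y$ to $Z$ and a bound $n$, whether $M$ is $n$-implementable is NP-complete. From such an instance I would build in polynomial time a cascade composition $T\circ H$ with $T$ a Mealy machine from $Y$ to $Z$, together with a bound $n'$, so that $T$ is $n'$-replaceable in $T\circ H$ precisely when $M$ is $n$-implementable. The idea is to run the restriction-machine construction of \cref{sec:covering} backwards: I want $\Out(H)=\Omega_M$ and $\lambda_T(\overline y)=\lambda_M(\overline y)$ for all $\overline y\in\Omega_M$ -- equivalently, I want the OM $T|_{\Img(H)}$ to be, as a specification, exactly $M$. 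Granting this, \cref{lemm:main} says that the replacements of $T$ in $T\circ H$ are exactly the Mealy machines $T'$ with $\lambda_{T'}(\overline y)=\lambda_M(\overline y)$ for all $\overline y\in\Omega_M$, i.e. exactly the implementations of $M$; hence $n$-replaceability of $T$ and $n$-implementability of $M$ coincide, and in the ideal case we may take $n'=n$.

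Producing $T$ is trivial: take $S_T:=S_M\cup\{\star\}$, copy the defined transitions of $M$, route every undefined pair $(s,y)$ to the sink $\star$ with an arbitrary output, and let $\star$ loop with an arbitrary output; then $\lambda_T$ and $\lambda_M$ agree on $\Omega_M$ and $|T|=O(|M|)$. Producing $H$ is where the one real point lies: I need a Mealy machine over some input alphabet $X$ with $\Out(H)=\Lcal(\Img(H))=\Omega_M$. Regard the defined transitions of $M$ as a partial DFA $A_M$ over $Y$ with all states accepting, so that $\Lcal(A_M)=\Omega_M$. To realize $A_M$ as $\Img(H)$, set $S_H:=S_M$, let $X$ have size $\max_s\deg_{A_M}(s)\le|Y|$, and at each state $s$ attach the inputs of $X$ to the $y$-labelled edges leaving $s$ (reusing one of them for any leftover inputs); deleting the input labels from $H$ then returns exactly $A_M$, so $\Out(H)=\Omega_M$ and $|H|=O(|M|)$.

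This realization of $H$ requires every reachable state of $M$ to have at least one defined transition, i.e. that $\Omega_M$ be \emph{extensible} (contains no maximal word): an output language $\Out(H)$ is always extensible, since from any reached output one more input may be applied, so $\Out(H)=\Omega_M$ is impossible otherwise. I would therefore first normalize $M$. Adjoin a fresh input letter and a fresh output letter $\#$, and a fresh state $q_\#$ that outputs $\#$ on every input and stays within $\{q_\#\}$; keep all of $M$'s defined transitions and all of its don't-cares over $Y$, and add, from every state, a defined $\#$-transition into $q_\#$ with output $\#$. The resulting IS machine $\widetilde M$ has $\Omega_{\widetilde M}$ extensible, and I claim $\widetilde M$ is $(k{+}1)$-implementable iff $M$ is $k$-implementable; combined with the construction above (applied to $\widetilde M$) this gives the reduction with $n'=n+1$. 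The easy ``$\le$'' direction simply appends one $\#$-forever state to an implementation of $M$. The ``$\ge$'' direction -- that the extra state is genuinely unavoidable -- is the part I expect to be delicate: one must argue that in any implementation of $\widetilde M$ the behaviour along $Y$-words already embeds an implementation of $M$, while the state entered after the single input $\#$ is forced to emit $\#$ on every subsequent letter and therefore cannot be reused for any of those. (If, as is often the case, NP-hardness of IS minimization is available already for machines with a total transition function, this normalization is unnecessary: then $\Omega_M=Y^*$, $H$ can be taken to be a one-state machine, and $n'=n$.) Putting the pieces together yields a polynomial many-one reduction and hence NP-hardness.
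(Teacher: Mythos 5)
Your proposal follows the same basic route as the paper's proof: reduce from $n$-implementability of IS Mealy machines \cite{Pfleeger_1973} by arranging $\Out(H)$ to coincide (essentially) with the domain of the IS machine and letting $T$ agree with it there, so that by \cref{lemm:main} the replacements of $T$ are exactly the implementations of the specification. Where you diverge is in handling the obstacle you correctly identify, namely that $\Out(H)$ of a completely specified head is always extension-closed while $\Omega_M$ may have maximal words. The paper resolves this inside the cascade rather than inside the IS machine: it enlarges the head's \emph{output} alphabet with a fresh symbol $\bot$, lets $H$ reproduce each defined input letter and emit $\bot$ forever once an undefined input occurs (so $\Out(H)=\Omega_N\{\bot\}^*$), and takes as $T$ an arbitrary completion of $N$ with $\bot$-self-loops at every state. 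Stripping or adding those self-loops converts replacements into implementations and back without changing the number of states, so the bound is $n'=n$ and no auxiliary lemma about shifted bounds is needed. Your variant pads the IS machine itself with a fresh letter $\#$ and a sink state, which costs you $n'=n+1$ and the claim that $\widetilde M$ is $(k{+}1)$-implementable iff $M$ is $k$-implementable --- precisely the step you flag as delicate and do not prove.

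That claim is true, and the missing direction can be closed as follows: in any implementation $P$ of $\widetilde M$, the state $p_{\#}:=\delta_P(\#)$ must output $\#$ on every subsequent input word; since $\#\notin Z$, the run of $P$ on any word of $\Omega_M$ can visit $p_{\#}$ only after its last letter. Hence deleting $p_{\#}$ and redirecting its incoming transitions (keeping their outputs, replacing an output $\#$ by an arbitrary letter of $Z$) to any surviving state yields a $k$-state implementation of $M$; the degenerate case $p_{\#}=r_P$ forces $\Omega_M\subseteq\{\epsilon\}$ and is trivial. With this filled in, your construction of $H$ --- reattaching input letters at each state so that $\Img(H)$ is exactly the transition graph of $\widetilde M$, which is legitimate because every state of $\widetilde M$ has an outgoing $\#$-transition --- completes a correct polynomial reduction. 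One caveat: your parenthetical shortcut for ``IS machines with a total transition function'' does not help in this paper's formalism, where a total transition function means the machine is completely specified and minimization is then polynomial; so some normalization (yours, or the paper's $\bot$ trick, which is the cheaper of the two) is genuinely required.
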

\begin{proof}
Let $N$ be an IS machine and let $n\in \N$.
We build machines $H$ and $T$
in polynomial time
satisfying that $N$ is $n$-implementable if and only if $T$ is $n$-replaceable in
$T\circ H$. As deciding whether $N$
is $n$-implementable is an NP-complete problem 
\cite{Pfleeger_1973},
this reduction proves the theorem. 
Informally, the aim is to build $H$
whose output language coincides with
$\Omega_N$, and use an arbitrary implementation of $N$ as $T$. This idea
does not quite work because it requires $\Omega_N$ to have no maximal words, which not may be the case, but the problem can be fixed adding some extra
output symbols and transitions. 
Let $Y$ and $Z$ be the input and output alphabets of $N$ respectively. 
Let $\hat{Y}\coloneqq Y\cup \{\bot\}, \, \hat{Z}\coloneqq Z\cup 
\{\bot \}$, where $\bot$ is a fresh symbol. 
We begin by building a machine $H$ from $Y$
to $\hat{Y}$. We set $S_H\coloneqq S_N\cup \{*\}$,
where $*$ is a fresh state, and $r_H\coloneqq
r_N$. Given $(s,y)\in S_N\times Y$, 
we define $\delta_H(s,y)\coloneqq \delta_N(s,y)$
and $\lambda_H(s,y)\coloneqq \lambda_N(s,y)$
if $(s,y)\in D_N$, and
$\delta_H(s,y)\coloneqq *$
and $\lambda_H(s,y)\coloneqq \bot$ otherwise. 
We also define 
$\delta_H(*,y)\coloneqq *$
and $\lambda_H(*,y)\coloneqq \bot$ for all $y$.
It is direct to see that $\Out(H)=\Omega_N \{ \bot \}^*$. Now we build 
another machine $T$ from $\hat{Y}$
to $\hat{Z}$. Let $N^\prime$
be an arbitrary implementation of $N$, built in polynomial time by simply adding the missing transitions. To construct $T$ we add self loops $\delta_T(s,\bot)=s$, $\lambda_T(s,\bot)=\bot$
to $N^\prime$ at each state $s\in S_{N^\prime}$.\par
Now we show that $T$ is $n$-replaceable
in $T\circ H$ if and only
if $N$ is $n$-implementable. As exposed in
\cref{lemm:main}, a machine $T^\prime$ is a valid replacement for $T$
if and only if for all $\overline{y}\in \Out(H)$
$\lambda_{T^\prime}(\overline{y})
= \lambda_T(\overline{y})$. Moreover, any word $\overline{y}\in \Out(H)$
is of the form $\overline{u}(\bot)^k$, 
where $\overline{u}\in \Omega_N$.
By construction $\lambda_T(\overline{u}(\bot)^k)=
\lambda_N(\overline{u})(\bot)^k$.
This implies the following: 
(1) Let $T^\prime$
be a replacement for $T$. Then removing all transitions on input $\bot$ from $T^\prime$
yields an implementation of $N$ of the same size. (2) Let $T^\prime$ be an implementation of
$N$. Then adding self-loops   $\delta_T(s,\bot)=s$, $\lambda_T(s,\bot)=\bot$
to all states of $T^\prime$ yields a replacement
for $T$ in $T\circ H$ with the same number of states. This proves the result.
\end{proof}

% \subsection{Complexity of the Problem}
% As discussed in \cref{obs:npmin}, deciding whether the tail $T$
% in a cascade composition $T\circ H$ has a replacement of size $n$
% (\cref{prob:n_replaceability}) is an NP problem. We show that it 
% is NP hard as well by showing a reduction from the $n$-implementability
% decision problem for ISFSMs. To the best of our knowledge,
% this lower bound in complexity has not been shown elsewhere.

% \begin{theorem}
% The $n$-replaceability problem for the tail of a cascade composition
% %(\cref{prob:n_replaceability}) 
% is NP-hard.
% \end{theorem}

\section{Synthesis of the Tail}
\label{sec:tailsynthesis}
In this second part of the paper we study 
the Tail Synthesis Problem,
and its associated Feasibility Problem. Our main result is the fact
that while the Feasibility Problem has polynomial time complexity, there are instances of the Synthesis Problem
where minimal solutions have exponential size, and hence the problem itself has exponential complexity. The proof
of this result relies on a construction that,
given a feasible instance of the Synthesis Problem, produces an OM that represents all its solutions. \par
% The synthesis problem we study here %, \cref{prob:equation},
% is a particular case of a ``unknown component equation'',
% \cite{villaUnknownComponentProblem2012}. The known procedure for solving these equations uses exponential
% time and no sub-exponential procedure is known 
% to solve the general case of \cref{prob:feasability}. \par  
% We show a  gap between the complexity of 
% \cref{prob:equation} and \cref{prob:feasability} by proposing a polynomial time procedure for deciding the feasibility
% of the tail synthesis problem, while showing  instances
% of the synthesis problem where all solutions have exponential size. This 
% second result gives an exponential lower bound for the complexity
% of \cref{prob:equation}, matching the complexity of the existing methods.\par

% \subsection{The E-Machine Approach}
The Tail Synthesis problem is a particular case of an ``unknown component equation'',
\cite{villaUnknownComponentProblem2012}.
In the general problem, a component $H$
and a system model $M$ are known, and
the goal is to find $T$ that connected to 
$H$ in a given way yields $M$.
The general approach 
for solving these equations is given in
\cite{watanabeMaximumSetPermissible1993a}.
This method is based on the computation
of the `$E$-machine", which is a DFA $E$ 
over input/output satisfying the
following condition. A Mealy machine $T$ is a
solution for the component equation if and only if all its traces $\langle \overline{y}, \lambda_T(\overline{y})\rangle$ are accepted
by $E$. Deciding whether such $T$
exists amounts to determining the winner of
a \textbf{safety game} on 
$A$, and synthesising $T$ is equivalent to giving 
a winning-strategy for the Protagonist
player in this game. Both these tasks can
be carried out in $O(|A|)$ time \cite{bloem2018graph,bloem2014sat, morgenstern2013solving}. However, building the E-machine
involves the determinization of a product construction, resulting in
$|A|=O(2^{|H|}|M|)$. This procedure gives an $O(2^{|H|}|M|)$ upper complexity bound both for checking whether an equation is feasible and for synthesising
the missing component.
The case where the unknown component
is the head of a cascade composition deserves
special attention, as in that instance
the E-machine can be obtained with no 
determinization \cite{benedettoModelMatchingFinitestate2001},
making the feasibility check and the synthesis task possible within polynomial time in this scenario.

\subsection{Feasibility of the Synthesis Problem}

In this section we characterize the feasible instances of the Tail Synthesis Problem, and show that the
feasibility check can be carried out in polynomial time.
% To our knowledge there is
% no explicit account of this fact in the literature. 
% However,
% we note that the arguments
% are fairly direct and only make use of
% known constructions. 
\par 
Let $H$ and $M$ be Mealy machines from $X$ to $Y$ and from $X$ to $Z$ respectively.
A solution $T$ to the synthesis problem (that is, $T\circ H\equiv M$) must satisfy
$\lambda_T(\lambda_H(\overline{x}))=
\lambda_M(\overline{x})$ for all $\overline{x}\in X^*$. In particular, if such solution $T$ exists, then there cannot be
two words $\overline{x}, \overline{x}^\prime$ with $\lambda_H(\overline{x})=\lambda_H(\overline{x}^\prime)$ but 
$\lambda_M(\overline{x})\neq\lambda_M(\overline{x}^\prime)$. We argue that the converse holds as well.

\begin{proposition} \label{thm:feasible}
There exists a Mealy machine $T$
with $T\circ H \equiv M$ if and only if
for any two words $\overline{x}, \overline{x}^\prime\in X^*$ with
$\lambda_H(\overline{x})=\lambda_H(\overline{x}^\prime)$ it holds 
$\lambda_M(\overline{x})=\lambda_M(\overline{x}^\prime)$ as well. 
\end{proposition}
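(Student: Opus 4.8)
The forward implication is already established in the paragraph preceding the statement, so the plan is to prove the converse: assuming the stated consistency condition, construct a Mealy machine $T$ from $Y$ to $Z$ with $T\circ H\equiv M$ by a subset construction on the synchronous product of $H$ and $M$. Size is irrelevant here — only existence is claimed — so the exponential blow-up of the subset construction is harmless.

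For $\overline{y}\in Y^*$ set $Q(\overline{y})\coloneqq\{(\delta_H(\overline{x}),\delta_M(\overline{x})) \mid \overline{x}\in X^*,\ \lambda_H(\overline{x})=\overline{y}\}$, so that $Q(\emptyword)=\{(r_H,r_M)\}$ and $\overline{y}\in\Out(H)$ iff $Q(\overline{y})\neq\emptyset$. One checks the recurrence $Q(\overline{y}y)=\{(\delta_H(s_H,x),\delta_M(s_M,x)) \mid (s_H,s_M)\in Q(\overline{y}),\ x\in X,\ \lambda_H(s_H,x)=y\}$; this shows the family $\{Q(\overline{y})\}$ is closed under these ``successor'' operations and is finite (at most $2^{|S_H||S_M|}$ distinct sets). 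The key point — and the only place the hypothesis is used — is that for $\overline{y}y\in\Out(H)$ the symbol $\lambda_M(s_M,x)$ is the same for all $(s_H,s_M)\in Q(\overline{y})$ and all $x\in X$ with $\lambda_H(s_H,x)=y$: if such a pair arises from $\overline{x}_1$ (i.e. $s_H=\delta_H(\overline{x}_1)$, $s_M=\delta_M(\overline{x}_1)$, $\lambda_H(\overline{x}_1)=\overline{y}$) and another from $\overline{x}_1'$ with witness $x'$, then $\lambda_H(\overline{x}_1 x)=\overline{y}y=\lambda_H(\overline{x}_1' x')$, so the assumption forces $\lambda_M(\overline{x}_1 x)=\lambda_M(\overline{x}_1' x')$, and comparing last symbols gives $\lambda_M(s_M,x)=\lambda_M(s_M',x')$. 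Call this common symbol $g(\overline{y},y)\in Z$.

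Now I would define $T$: take $S_T\coloneqq\{Q(\overline{y}) \mid \overline{y}\in\Out(H)\}\cup\{\fail\}$ with initial state $Q(\emptyword)$, where $\fail$ is a fresh sink. For $Q\neq\fail$ and $y\in Y$, let $Q'$ be the successor set given by the recurrence; if $Q'\neq\emptyset$, put $\delta_T(Q,y)\coloneqq Q'$ and $\lambda_T(Q,y)\coloneqq g(\overline{y},y)$, which is well defined and independent of the representative $\overline{y}$ with $Q(\overline{y})=Q$ by the previous paragraph; otherwise put $\delta_T(Q,y)\coloneqq\fail$ with an arbitrary output, and let $\fail$ loop on every input with arbitrary output. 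This $T$ is a completely specified Mealy machine.

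Finally I would check $T\circ H\equiv M$, i.e. $\lambda_T(\lambda_H(\overline{x}))=\lambda_M(\overline{x})$ for all $\overline{x}\in X^*$, by induction on $|\overline{x}|$. A short sub-induction shows $\delta_T(\lambda_H(\overline{x}))=Q(\lambda_H(\overline{x}))$ for every $\overline{x}$ — the recurrence for $Q$ is exactly $T$'s transition rule, and $Q(\lambda_H(\overline{x}))$ is never empty since $\lambda_H(\overline{x})\in\Out(H)$, so the $\fail$ branch is never taken. Writing $\overline{x}=\overline{x}_0x$ and $s_H=\delta_H(\overline{x}_0)$, the run of $T$ on $\lambda_H(\overline{x})$ reaches state $Q(\lambda_H(\overline{x}_0))$ after $\lambda_H(\overline{x}_0)$ and then emits $g(\lambda_H(\overline{x}_0),\lambda_H(s_H,x))=\lambda_M(\delta_M(\overline{x}_0),x)$, which is the last symbol of $\lambda_M(\overline{x})$; with the induction hypothesis this yields $\lambda_T(\lambda_H(\overline{x}))=\lambda_M(\overline{x})$. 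The main obstacle is purely conceptual rather than computational: getting the well-definedness of $g$ right (this is where the hypothesis is essential) and the bookkeeping that $T$'s state after reading $\lambda_H(\overline{x})$ is exactly $Q(\lambda_H(\overline{x}))$; the rest is routine.
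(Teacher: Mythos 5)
Your proof is correct: the recurrence for $Q(\overline{y})$, the well-definedness of $g$ (the only place the hypothesis enters, and you verify it precisely, including independence of the representative $\overline{y}$ with $Q(\overline{y})=Q$), and the two inductions establishing $\delta_T(\lambda_H(\overline{x}))=Q(\lambda_H(\overline{x}))$ and $\lambda_T(\lambda_H(\overline{x}))=\lambda_M(\overline{x})$ all go through. The route differs from the paper's in how the realization step is handled. The paper also starts from the observation that the hypothesis makes the induced map $F:\Out(H)\rightarrow Z^*$, $F(\lambda_H(\overline{x}))=\lambda_M(\overline{x})$, well defined, but then finishes abstractly: it forms the paired language $\Lcal_F=\{\langle\overline{y},F(\overline{y})\rangle \mid \overline{y}\in\Out(H)\}$, asserts it is regular and prefix-closed, and concludes that some (completed) Mealy machine $T$ realizes $F$ on $\Out(H)$. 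You instead build $T$ explicitly by a powerset construction on the synchronous product of $H$ and $M$, with a sink state $\fail$ to complete the machine, and verify correctness by induction. In effect your construction supplies the justification the paper leaves implicit in its ``clearly regular'' claim (a subset construction on $H\times M$ is exactly how one would produce a DFA for $\Lcal_F$), at the cost of being longer and of performing the very determinization the paper elsewhere works to avoid --- which is harmless here, as you note, since the proposition only asserts existence and \cref{thm:lower_bound} shows an exponential blow-up is in general unavoidable anyway. The paper's argument is shorter and stays at the language level; yours is self-contained and makes the single use of the hypothesis (well-definedness of the next output symbol) completely explicit.
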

\begin{proof}
See \cref{ann:feasible_proof}.
\end{proof}

As a consequence of this result, deciding the feasibility of
the Tail Synthesis Problem given by  $H$ and $M$ is equivalent to checking whether
$\lambda_H(\overline{x})=\lambda_H(\overline{x}^\prime)$, but
$\lambda_T(\overline{x})\neq\lambda_T(\overline{x}^\prime)$
for some $\overline{x}, \overline{x}^\prime\in X^*$.
The existence of such words $\overline{x},
\overline{x}^\prime$ 
can be easily computed in $O(|H|^2|M|^2)$
time via a fixed point procedure on the synchronous product of
$H$ and $M$.

% The machine $H\times M$ is a CS Mealy machine from $X$ to $Y\times Z$
% where $S_{H\times M}\coloneqq S_H \times S_M$, $r_{H\times M}=
% (r_H,r_M)$. Given $s\coloneqq (s_H,s_M)\in S_H\times S_M$ and $x\in X$, we define
% $\delta_{H\times M}(s,x)\coloneqq
% (\delta_H(s_H,x), \delta_M(s_M,x))$, and $\lambda_{H\times 
% M}(s,x)=(\lambda_H(s_H,x),\lambda_M(s_M,x))$. \par

% We define a compatibility relation
% over the states of $H\times M$ similarly as before for OMs. We say that $s_1, s_2\in
% S_H\times S_M$ are incompatible,
% written $s_1\nsim s_2$, if
% $\lambda_{H\times M}(s_1,\overline{x})
% =\langle \overline{y}, \overline{z}\rangle$
% and
% $\lambda_{H\times M}(s_2,\overline{x}^\prime)
% =\langle \overline{y}, \overline{z}^\prime \rangle$, for some 
% $\overline{y}, \overline{y}, \overline{x}^\prime$,
%  and $\overline{z}\neq
% \overline{z}^\prime$. The compatibility
% relation over $H\times M$ is symmetric,
% but it may fail to be transitive or reflexive. To see this last point, note
% that given $s\in S_H\times S_M$,
% there may exist $\overline{x}, \overline{x}^\prime$

%%RB: unnecessary 
%\begin{proof}
%$T\circ H \not\equiv M$
%%if and only if for some $\overline{x}\in X^*$
%$T(H(\overline{x}))\neq M(\overline{x})$. From here the result 
%follows analogously to \cref{lemm:main}.
%\end{proof}

\subsection{Representing all Solutions}
\label{def:env_aut}

We give the construction of an OM with
size $O(|H||M|)$
encoding all solutions for a 
feasible instance of the
Tail Synthesis Problem. We note that this
is an exponentially more succinct 
representation of the solutions than the E-machine from \cite{watanabeMaximumSetPermissible1993a}.
In fact, this construction 
is equivalent to the NDE-machine
introduced in \cite{harrisSynthesisFiniteState1998,watanabeLogicOptimizationInteracting1994} but with
universal acceptance conditions, rather than
non-deterministic ones. \par

% We claim that  
% this change of semantics allows the NDE-machine
% to represent the same behaviours as
% the E-machine even in the general setting
% where arbitrary compositions are allowed. 
% This effectively avoids the determinization step needed to build the E-machine.
% \todo[inline]{Do we include this? It feels like it should be justified. It is straightforward, but requires introducing 
% constructions/concepts that are not very related to the rest of the paper.}

% This 
% necessitate an exponential determinization step. 
% \todo{If time check}
%However, the authors considered non-deterministic
%branching on this construction, rather than universal branching, making the 
%determinization step necessary in order to represent the desired language. \par

% The \textbf{universal E-machine given by} $H$ \textbf{and}
% $M$, is an OM 
We define an OM $N$ from $Y$ to $Z$
as follows. Let
$S_N \subseteq S_H\times S_M$ 
be the set of pairs $(s_H, s_M)$
that are reachable in the synchronous product $H\times M$.
That is, those satisfying
$\delta_H(\overline{x})=s_H$ and
$\delta_M(\overline{x})=s_M$
for some $\overline{x}\in X^*$.
Let $r_N\coloneqq (r_H,r_M)$. 
We define the transition and output functions $\Delta_N$, $\lambda_N$ for $N$.
Fix a transition $(s_H,s_M)\in S_N, y\in Y$. Let $V$
be the set of $x\in X$ 
satisfying $\lambda_H(s_H,x)=y$.
We take two cases into consideration. If the set $V$ is empty, then we set
the transition as undefined $((s_H,s_M),y)\notin D_N$.
Otherwise, if $V\neq \emptyset$,
we mark the transition as defined
$((s_H,s_M),y)\in D_N$.
In this case $\lambda_M(s_M,x)$ takes a unique value $z$ for all $x\in V$. Indeed, the opposite would yield two sequences 
$\overline{x}, \overline{x}^\prime\in X^*$
with $\lambda_H(\overline{x})=
\lambda_H(\overline{x}^\prime)$ 
and $\lambda_H(\overline{x})
\neq \lambda_M(\overline{x}^\prime)$, making the Tail Synthesis Problem given by $H$ and $M$
infeasible. Hence, we can define
$\lambda_N((s_H,s_M),y)$ as $z$, and $\Delta_N((s_H,s_M),y)$ as
$\{ \,
(s_H^\prime, s_M^\prime) \, \mid
\, \exists x\in V \text{ s.t. }
s_H^\prime=\delta_H(s_H,x),\,
s_M^\prime=\delta_M(s_M,x)
\, \}$.

\begin{proposition} 
Let $N$ be the OM defined above.
A machine $T$ satisfies 
$T\circ H \equiv M$ if and only if
$T$ implements $N$.
\end{proposition}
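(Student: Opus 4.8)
The plan is to isolate two properties of the OM $N$ that do not mention $T$ at all, and then obtain both implications from the defining identity of the cascade product, $\lambda_{T\circ H}(\overline{x})=\lambda_T(\lambda_H(\overline{x}))$ for all $\overline{x}\in X^*$ (an immediate induction on the product construction defining $T\circ H$). The two properties I would prove first are: (i) $\Omega_N=\Out(H)$; and (ii) $N$ is consistent, and $\lambda_N(\lambda_H(\overline{x}))=\lambda_M(\overline{x})$ for every $\overline{x}\in X^*$.

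For (i) and (ii) I would run an induction on word length that matches runs of $N$ from $r_N=(r_H,r_M)$ with input words of $H$. The invariant is: after a run of $N$ has read a prefix $y_0\dots y_{k-1}$, its current state is $(\delta_H(\overline{x}),\delta_M(\overline{x}))$ for some $\overline{x}=x_0\dots x_{k-1}$ with $\lambda_H(\overline{x})=y_0\dots y_{k-1}$, and the output emitted so far equals $\lambda_M(\overline{x})$; conversely, every such $\overline{x}$ induces a run of $N$ on $\lambda_H(\overline{x})$. The induction step is exactly the case split in the definition of $N$: at a state $(s_H,s_M)$ reached by $\overline{x}$ and on a letter $y$, the set $V$ equals $\{x\in X:\lambda_H(s_H,x)=y\}$, so the run extends by $y$ iff $y$ is an output of $H$ at $s_H$, and each successor pair $(\delta_H(s_H,x),\delta_M(s_M,x))$ with $x\in V$ is reached by $\overline{x}x$ and therefore lies in $S_N$. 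Consequently $\Omega_N=\{\lambda_H(\overline{x}):\overline{x}\in X^*\}=\Out(H)$. Consistency and the output identity then follow from \cref{thm:feasible}: feasibility means $\lambda_H(\overline{x})=\lambda_H(\overline{x}')$ implies $\lambda_M(\overline{x})=\lambda_M(\overline{x}')$, so all runs of $N$ over a fixed $\overline{y}\in\Omega_N$ emit the same output, which equals $\lambda_M(\overline{x})$ for any $\overline{x}$ with $\lambda_H(\overline{x})=\overline{y}$ (this is also the well-definedness already used when defining $\lambda_N$).

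Given (i) and (ii), each implication is a one-line calculation, valid for any Mealy machine $T$ from $Y$ to $Z$ (the alphabets fit because $N$ is an OM from $Y$ to $Z$). If $T$ implements $N$, then $\lambda_T$ and $\lambda_N$ agree on $\Omega_N$, hence for every $\overline{x}\in X^*$ we get $\lambda_{T\circ H}(\overline{x})=\lambda_T(\lambda_H(\overline{x}))=\lambda_N(\lambda_H(\overline{x}))=\lambda_M(\overline{x})$, using $\lambda_H(\overline{x})\in\Out(H)=\Omega_N$; thus $T\circ H\equiv M$. Conversely, if $T\circ H\equiv M$, then for every $\overline{x}$ we have $\lambda_T(\lambda_H(\overline{x}))=\lambda_{T\circ H}(\overline{x})=\lambda_M(\overline{x})=\lambda_N(\lambda_H(\overline{x}))$; since every $\overline{y}\in\Omega_N$ is of the form $\lambda_H(\overline{x})$, this means $\lambda_T$ agrees with $\lambda_N$ on $\Omega_N$, i.e. $T$ implements $N$.

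The only genuinely non-routine part is the induction establishing (i) and (ii), and specifically checking that the reachability restriction built into $S_N$ discards no run — equivalently, that the states visited along runs of $N$ are exactly the reachable pairs of $H\times M$. I expect that bookkeeping to be the main (though mild) obstacle; everything after it reduces to unwinding the definition of ``implements'' and the cascade product.
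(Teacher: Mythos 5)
Your proposal is correct and follows essentially the same route as the paper's proof: both rest on the facts that $\Omega_N=\Out(H)$ and that runs of $N$ over $\overline{y}$ with output $\overline{z}$ correspond exactly to words $\overline{x}$ with $\lambda_H(\overline{x})=\overline{y}$ and $\lambda_M(\overline{x})=\overline{z}$, combined with the characterization $T\circ H\equiv M$ iff $\lambda_T(\lambda_H(\overline{x}))=\lambda_M(\overline{x})$ for all $\overline{x}$. The only difference is that you spell out by induction the run correspondence that the paper asserts ``by construction,'' which is a harmless elaboration.
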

\begin{proof}
The machine $T$ satisfies $T\circ
H\equiv M$ if and only if
for all $\overline{y}\in \Out(H)$
and $\overline{x}\in X^*$
with $\lambda_H(\overline{x})=\overline{y}$, it holds 
$\lambda_T(\overline{y})=\lambda_M(\overline{x})$. Note that by construction $\Omega_N=\Out(H)$. 
Moreover, there is a run of $N$ over
$\overline{y}$ whose output is $\overline{z}$ if and only if there is some $\overline{x}\in X^*$ 
satisfying $\lambda_H(\overline{x})=\overline{y}$
and $\lambda_M(\overline{x})=\overline{z}$. This shows the result. 
\end{proof}

\subsection{ Lower Bounds for Synthesising the Tail }
In this section we show
that there are instances of the synthesis problem
(\cref{prob:equation}) where all solutions have at least exponential size. 

\begin{theorem} \label{thm:lower_bound}
There exist finite alphabets $X,Y,Z$ and 
families of 
increasingly large
Mealy machines $\{M_n\}_n$, 
from $X$ to $Y$, and $\{H_n\}_n$,
from $X$ to $Z$,
for which the size of
any $T_n$ satisfying $T_n\circ H_n\equiv M_n$
is bounded from below by an exponential function of $|M_n||H_n|$. 
\end{theorem}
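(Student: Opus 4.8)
The plan is to exhibit a family of instances where the head $H_n$ "hides" information that the tail must nevertheless reconstruct, forcing the tail to remember an amount of state that is exponential in $n$ while $H_n$ and $M_n$ stay polynomial in $n$. The natural mechanism is a classical automata-theoretic lower bound: the head reads a block of $n$ input symbols and emits an output stream that is *uninformative* about which subset of positions was "marked" (e.g. it emits a fixed word, or a word that only records the multiset, not the sequence), and then, triggered by a special query symbol $?$, the head starts replaying enough of the past so that the desired global behaviour $M_n$ forces the tail's output to depend on the full $n$-bit history. Concretely I would let $X = \{0,1,?\}$, have $H_n$ (with $O(n)$ states, essentially a counter) pass through the first $n$ bits but with outputs chosen so that $\Out(H_n)$ collapses many distinct input prefixes to the same output prefix, and let $M_n$ (also $O(n)$ states, reading the same input $X$) specify an output that *does* distinguish those prefixes once the $?$ is seen. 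Feasibility must be checked: by Proposition \ref{thm:feasible}, one needs that whenever $\lambda_{H_n}(\overline x) = \lambda_{H_n}(\overline x')$ we also have $\lambda_{M_n}(\overline x) = \lambda_{M_n}(\overline x')$ — so the design constraint is that $H_n$'s output determines $M_n$'s output as a *language* fact, even though no small Mealy machine can realize the corresponding function from $\Out(H_n)$ to outputs.

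The key steps, in order, are: (1) fix the alphabets and define $H_n$ and $M_n$ explicitly with $O(n)$ states each, making the "collapsing" structure of $\Out(H_n)$ precise; (2) verify the instance is feasible via Proposition \ref{thm:feasible}, i.e. show $\lambda_{H_n}(\overline x) = \lambda_{H_n}(\overline x') \implies \lambda_{M_n}(\overline x) = \lambda_{M_n}(\overline x')$ — this is the correctness heart of the construction; (3) prove the state lower bound for any solution $T_n$ by a fooling-set / Myhill–Nerode argument: identify $2^n$ (or $2^{\Omega(n)}$) words $\overline y_S \in \Out(H_n)$, one per subset $S \subseteq [n]$, such that the continuations of $\overline y_S$ on which $T_n$ is constrained (the replay phase after $?$) force pairwise distinct residual output behaviours, so $\delta_{T_n}(\overline y_S)$ must be pairwise distinct states; (4) conclude $|T_n| \geq 2^{\Omega(n)}$ while $|H_n|\cdot|M_n| = O(n^2)$, giving the claimed exponential gap. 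One delicate point is that $T_n$ only needs to be correct on $\Out(H_n)$, not on all of $Y^*$, so the fooling set must consist of words genuinely in $\Out(H_n)$ and the distinguishing continuations must also keep us inside $\Out(H_n)$ — the replay gadget in $H_n$ must be designed so that after emitting $\overline y_S$ the head can be driven (by appropriate further inputs) to output any of the distinguishing suffixes, and $M_n$'s specification on those runs pins down $T_n$'s output.

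The main obstacle I expect is reconciling two conflicting requirements in step (2) versus step (3): the head must make $\Out(H_n)$ *small in information* (many input prefixes mapped to one output prefix, so the tail is "blind") yet, for feasibility, the *pair* $(H_n, M_n)$ must be consistent, meaning the information $M_n$ demands on those prefixes is somehow recoverable from later head output — and then step (3) needs that recovery to require exponential memory *in the tail's state*, not merely exponential in the length of words. The resolution is that the tail must buffer the needed bits *before* it learns (from the head's later output) which bit is being queried: if the head reveals the query index only after the whole $n$-bit block has passed, the tail is forced to have remembered all $2^n$ possible block-values, since for each the eventual required output differs. Making this "the query comes too late" structure airtight — in particular ensuring the head stays at $O(n)$ states while still being able to echo an arbitrary one of the $n$ past bits (a counter plus a single remembered bit per position is too much, so the echo must be arranged so the head replays positions *in order* and $M_n$, which also sees the original input, does the selection) — is where the construction has to be done carefully; I would handle it by having $H_n$ deterministically replay the marks in a canonical order after $?$ and letting $M_n$ compare that replay against the original input to define the target output, so that both machines remain linear-size.
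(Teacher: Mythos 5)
Your high-level intuition (a query that arrives only after the data has passed, forcing the tail to buffer exponentially much) is the same phenomenon the paper exploits, and your fooling-set step (3) and the caution about staying inside $\Out(H_n)$ are sound. But the construction you sketch in steps (1)--(2) has a genuine gap, and it sits exactly at the tension you name. A head with $O(n)$ states cannot ``replay'' an $n$-bit block from its own memory: the information about the past contained in the head's future output stream is bounded by $\log_2 |S_{H_n}|$ bits (its current state). So if $H_n$ collapses the block to an uninformative output during the block phase, then by feasibility (Proposition \ref{thm:feasible}) the demanded output of $M_n$ can depend on the hidden block only through those $O(\log n)$ bits, and no exponential lower bound on the tail can arise from information the head hides and later ``reveals''. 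Your proposed fix --- let the replay be driven by fresh inputs after $?$ and let $M_n$ ``compare that replay against the original input'' --- fails on the other side: either the comparison forces $M_n$ to remember the original $n$-bit block, making $|M_n|$ itself exponential (which destroys the bound ``exponential in $|M_n||H_n|$''), or, if the post-$?$ head output is just an echo of fresh inputs, feasibility forces $\lambda_{M_n}$ to be independent of the first block altogether, and the hardness evaporates.

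The missing idea is that the head should \emph{not} hide the data at all; it should pass the data through in real time and strip only per-step \emph{advice} from the input. The paper's route makes this precise via observation machines: Lemma \ref{thm:big_spec} gives $O(n)$-state OMs $N_n$ (with universal branching, degree $2$) whose every implementation needs $2^n$ states --- the ``output the $i$-th past bit when the first $b$ after position $n$ arrives'' behaviour --- and Lemma \ref{thm:split} splits $N_n$ into $H_n$ and $M_n$ of size $|S_{N_n}|+1$ over the enlarged input alphabet $X = Y \times [k]$, where the label component resolves the branching of $N_n$. The head outputs only the $Y$-component, so the tail must implement $N_n$ and hence determinize (cost $2^n$), while $M_n$ stays linear because the label component of the input tells it which branch to follow, so it never has to remember the block. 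Without this advice-stripping mechanism (or an equivalent one), a direct construction over $X=\{0,1,?\}$ of the kind you describe cannot simultaneously keep $H_n$ and $M_n$ polynomial, satisfy Proposition \ref{thm:feasible}, and force an exponential tail, so as written the proposal does not establish Theorem \ref{thm:lower_bound}.
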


The proof of this result has two
parts.
First, we show an infinite
family of OMs
for which all implementations have exponential size (\cref{thm:big_spec}). 
Afterwards we prove that any OM $N$ can be ``split'' into Mealy
machines $M$ and $H$ with the same
number of states (plus one) as $N$
for which any $T$
with $T\circ H\equiv M$ provides an implementation of $N$ (\cref{thm:split}).
These two results together prove \cref{thm:lower_bound}. Some care has to be
employed in order to obtain fixed alphabets
in \cref{thm:lower_bound}. The alphabets for
$H$ and $T$ in the splitting construction
of \cref{thm:split} depend on the alphabets
of $N$ and its \textbf{degree}, defined as $d(N)\coloneqq \max_{(s,y)\in D_N} |\Delta_N(s,y)|$. Hence, it is necessary
to ensure in \cref{thm:big_spec}
that the resulting OMs have bounded degree.
 
\begin{lemma}
\label{thm:big_spec}
There are finite alphabets
$Y,Z$ and increasingly large
OMs $\{ N_n \}_n$ from $Y$
to $Z$ for which the size
of a machine $T_n$ implementing
$N_n$ is bounded from below 
by an exponential function of $|N_n|$.
Moreover, it is possible to build
$N_n$ in such a way that
$\max_n d(N_n) = 2$ 
\end{lemma}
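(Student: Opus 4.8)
The plan is to construct the family $\{N_n\}_n$ so that every implementation must, in effect, keep track of a subset of an $n$-element set, forcing $2^n$ states, while each $N_n$ itself has only polynomially many states and degree at most $2$. I would work over a small fixed alphabet: let $Y$ contain a ``write'' letter for each of a constant number of actions and $Z = \{0,1,\ast\}$ or similar. The high-level idea: the OM reads a word that first describes, via universal branching, a nondeterministically chosen element $i \in [n]$ together with a record of which elements have been ``marked'', and then asks a query whose correct output depends on whether the queried element was marked. Because the branching is universal, an implementation $T_n$ must give the right answer along every branch simultaneously, so after reading the ``marking'' phase it must remember the entire marked set — there are $2^n$ such sets, hence $|S_{T_n}| \ge 2^n$. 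Meanwhile $N_n$ only needs states proportional to $n$ (a small gadget per position, plus a constant number of control states), and by routing each branching step through a chain of binary choices we keep $d(N_n) = 2$.

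Concretely, I would let input words have the shape $(\text{mark } b_0)(\text{mark } b_1)\cdots(\text{mark } b_{n-1})\,(\text{query } j)$ where each $b_k \in \{0,1\}$ and $j \in [n]$ is encoded in binary over $O(\log n)$ letters. On the ``query $j$'' phase $N_n$ branches universally over all $j$ consistent with what has been read, and the output it demands is $b_j$. An implementation must therefore, at the moment the query begins, already be in a state that determines $b_j$ for every $j$, i.e. it encodes $(b_0,\dots,b_{n-1})$; distinct marking vectors force distinct states. I would make $\Omega_{N_n}$ exactly the set of well-formed words so that ill-formed continuations impose no constraint (consistency of $N_n$ is then easy to check: along every run the output is literally copied from the stored bit, so it is well-defined). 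To get degree $2$, I replace any single transition with fan-out $m$ by a balanced binary tree of $\lceil \log_2 m \rceil$ auxiliary states reading dummy letters; this multiplies $|S_{N_n}|$ by only a logarithmic factor and leaves the exponential lower bound on $|S_{T_n}|$ intact, since the implementation still must disambiguate all $2^n$ marking vectors after a polynomial-length prefix.

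The main obstacle I anticipate is twofold: first, making the universal branching actually do the work of an EXP lower bound with a fixed alphabet and degree bound $2$ — the naive encoding wants large fan-out or large alphabet, so the binary-tree ``unrolling'' and the binary encoding of the index $j$ need to be set up carefully so that the OM stays consistent and $\Omega_{N_n}$ is the intended language. Second, the lower-bound argument itself: I must show not just that one natural implementation is large but that \emph{every} implementation $T_n$ is. For this I would use a fooling-set / Myhill–Nerode style argument on $N_n$: exhibit $2^n$ words $w_S$ (one per subset $S \subseteq [n]$, namely the marking phase that marks exactly $S$) such that for $S \ne S'$ there is a common defined continuation (a query $j$ in the symmetric difference) on which the required outputs differ; then $\delta_{T_n}(w_S) \ne \delta_{T_n}(w_{S'})$ for any implementation $T_n$, giving $|S_{T_n}| \ge 2^n$. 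Finally I would record that $|S_{N_n}| = O(n \log n)$ (or $O(n)$ if one is more careful), so $2^n$ is exponential in $|N_n|$, and note explicitly that $\max_n d(N_n) = 2$ by the construction, completing the lemma and feeding into \cref{thm:split} and thence \cref{thm:lower_bound}.
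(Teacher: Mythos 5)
Your proposal follows essentially the same route as the paper's proof (\cref{ann:big_aut}): an OM that reads an $n$-bit ``data'' phase followed by a query, and that uses universal branching to guess the queried position in advance and latch only that single bit (keeping $N_n$ polynomial-size, consistent, and of degree $2$), while a fooling-set argument over the $2^n$ data words forces every implementation to have exponentially many states. The paper's concrete instantiation is only a simpler encoding of the same idea: over $Y=\{a,b\}$ the query is the position of the first $b$ after the first $n$ inputs, and the required output is the input symbol read $n$ steps earlier, which yields degree $2$ directly without your binary-encoded index or dummy-letter binary trees.
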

\begin{proof}

See \cref{ann:big_aut}.
\end{proof}

\begin{lemma}
\label{thm:split}
Let $N$ be a consistent
OM from $Y$ to $Z$, and let $k\coloneqq d(N)$. 
Let $X\coloneqq Y\times [k]$, 
$\hat{Y}=Y\cup \{\bot \}$, and
$\hat{Z}=Z\cup \{ \bot \}$, where $\bot$ is a fresh symbol.
Then there exist
Mealy machines $H$, from
$X$ to $\hat{Y}$,
and $M$, from $X$ to $\hat{Z}$,
such that
(1) $|S_H|,|S_M|= |S_N|+1$, 
(2) there are machines
$T$ with $T\circ H \equiv M$,
and
(2) any of those machines $T$
satisfies $\lambda_T(\overline{y})=\lambda_N(\overline{y})$ for all
$\overline{y}\in \Omega_N$.
\end{lemma}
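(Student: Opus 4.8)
The plan is to build $H$ and $M$ so that both simulate the state transitions of $N$ \emph{in lockstep}, using the extra coordinate in the input alphabet $X = Y \times [k]$ to record which branch of $N$ a run takes at each step, and using the fresh output symbol $\bot$ together with one fresh state to absorb inputs that fall outside $\Omega_N$. Concretely, for each $(s,y)\in D_N$ I would fix a surjective map $\sigma_{s,y}\colon [k] \to \Delta_N(s,y)$, which exists because $|\Delta_N(s,y)| \le k = d(N)$. Let both $H$ and $M$ have state set $S_N \cup \{*\}$ and initial state $r_N$, where $*$ is fresh. At a state $s\in S_N$ on input $(y,j)$: if $(s,y)\in D_N$, both machines move to $\sigma_{s,y}(j)$, with $\lambda_H(s,(y,j)) \coloneqq y$ and $\lambda_M(s,(y,j)) \coloneqq \lambda_N(s,y)$; if $(s,y)\notin D_N$, both move to $*$ and output $\bot$. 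At $*$, both self-loop outputting $\bot$. This immediately gives $|S_H| = |S_M| = |S_N|+1$, so (1) holds.

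The key preliminary fact, which I would prove by a routine induction on input length, is that on any input word $\overline x = \langle \overline y, \overline j\rangle$ the runs of $H$ and of $M$ pass through the same state sequence $p_0,\dots,p_n \in S_N \cup \{*\}$; that there is a $t \le n$ (the position of the first $\bot$ in $\lambda_H(\overline x)$, with $t=n$ if there is none) such that $p_0,\dots,p_t$ is a run of $N$ over $\overline u \coloneqq y_0\dots y_{t-1}$ and $p_{t+1}=\dots=p_n=*$; that $\lambda_H(\overline x) = \overline u\,\bot^{n-t}$; and that $\lambda_M(\overline x) = \overline z\,\bot^{n-t}$ where $\overline z = \lambda_N(p_0,y_0)\dots\lambda_N(p_{t-1},y_{t-1})$ is the output of that run of $N$, hence $\overline z = \lambda_N(\overline u)$ because $N$ is consistent. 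Conversely, surjectivity of the $\sigma_{s,y}$ ensures that every run of $N$ over a word $\overline y\in\Omega_N$ is traced by $H$ for a suitable choice sequence $\overline j$, in which case $H$ never reaches $*$.

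For (2), by \cref{thm:feasible} it suffices to show that $\lambda_H(\overline x) = \lambda_H(\overline x')$ implies $\lambda_M(\overline x) = \lambda_M(\overline x')$. Since outputs have the length of their inputs, $\overline x$ and $\overline x'$ share a common length $n$; write $\overline x = \langle \overline y, \overline j\rangle$, $\overline x' = \langle \overline y', \overline j'\rangle$, and $\overline w \coloneqq \lambda_H(\overline x) = \lambda_H(\overline x')$. By the preliminary fact the first-$\bot$ position $t$ is the same for both, and $\overline u \coloneqq w_0\dots w_{t-1}$ equals the common prefix $y_0\dots y_{t-1} = y'_0\dots y'_{t-1}$, which lies in $\Omega_N$; hence $\lambda_M(\overline x) = \lambda_N(\overline u)\,\bot^{n-t} = \lambda_M(\overline x')$. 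For (3), let $T$ be any Mealy machine with $T\circ H \equiv M$ and let $\overline y\in\Omega_N$ have a run $s_0,\dots,s_n$ of $N$; choosing $j_i$ with $\sigma_{s_i,y_i}(j_i) = s_{i+1}$ and setting $\overline x \coloneqq \langle \overline y, \overline j\rangle$, the run of $H$ on $\overline x$ never reaches $*$, so $\lambda_H(\overline x) = \overline y$ while $\lambda_M(\overline x) = \lambda_N(\overline y)$; since $\lambda_T(\lambda_H(\overline x)) = \lambda_M(\overline x)$, this yields $\lambda_T(\overline y) = \lambda_N(\overline y)$, as required.

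The part needing the most care is the preliminary induction and, within it, the point that the equalities relating the ``data before the first $\bot$'' of $\overline x$ and $\overline x'$ are exactly what consistency of $N$ delivers: one must check that the common prefix $\overline u$ genuinely lies in $\Omega_N$ (it is witnessed by the partial run $p_0,\dots,p_t$), so that ``the output of this particular run of $N$ over $\overline u$'' is unambiguously $\lambda_N(\overline u)$, and that $H$ and $M$ really do move in lockstep, so that $M$'s output on $\overline x$ is verbatim the $N$-output of the run traced by $H$. Everything else is bookkeeping over the two cases (whether or not a $\bot$ occurs).
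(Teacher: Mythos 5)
Your proposal is correct and follows essentially the same route as the paper: both build $H$ and $M$ on state set $S_N\cup\{*\}$ simulating $N$ in lockstep, use the $[k]$-component of the input to select the branch of $\Delta_N$, absorb undefined behaviour with $\bot$ and the sink $*$, derive feasibility from \cref{thm:feasible}, and recover $\lambda_T(\overline{y})=\lambda_N(\overline{y})$ by tracing an arbitrary run of $N$ through $H$. The only (immaterial) difference is that you encode branch selection with surjections $\sigma_{s,y}\colon[k]\to\Delta_N(s,y)$, whereas the paper uses a per-pair injective edge labelling $L$ and sends unused labels to $*$.
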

\begin{proof}
We give a explicit construction for $H$ and $M$.
We define the
edge set of $N$, 
$G_N\subset S_N\times Y \times S_N$, as the set consisting of
the triples $(s,y,s^\prime)$
where $(s,y)\in D_N$ and
$s^\prime\in \Delta_N(s,y)$.
By the definition of $k$, 
we can build a map $L:G_N\rightarrow [k]$
satisfying 
$L(s,y,t_1)\neq L(s,y,t_2)$, 
for any $(s,y)\in D_T$ and 
any two different states $t_1, t_2\in \Delta_N(s,y)$. The map $L$ assigns a number 
$0\leq i < k$ to each edge
$(s,y,s^\prime)\in G_N$, giving different labels to each edge corresponding
to a pair $(s,y)\in D_N$.
We define $H$ and $M$ at the same time. Set $S_H,S_M\coloneqq S_N\cup \{*\}$, where $*$ is
a fresh state, and $r_H,r_M\coloneqq r_N$. 
Let $s\in S_N\cup \{*\}$ and let
$x\coloneqq (y,i)\in X=Y\times [k]$. We take into account three cases: 
(1) If $s=*$, then $\delta_H(*,x),\delta_M(*,x)=*$,
and $\lambda_H(*,x),\lambda_M(*,x)=\bot$. (2)
If $s\in S_N$ and there exists some $t\in S_N$ with $L(s,y,t)=i$, then 
$\delta_H(s,x),\delta_M(s,x)=t$,
$\lambda_H(s,x)=y$, and $\lambda_M(s,x)=
\lambda_N(s,y)$. Finally, 
(3) if $s\in S_N$ and there is no
$t$ with $L(s,y,t)=i$, then
$\delta_H(s,x),\delta_M(s,x)=*$,
$\lambda_H(s,x),\lambda_M(s,x)=\bot$.
We claim that $H$ and $M$ built this way
satisfy the theorem's statement
(see \cref{app:split}).
\end{proof}

% \begin{proof}[Proof of \cref{thm:lower_bound}:]
% Let $\{N_n\}_n$ be the sequence of OMs given in \cref{ann:big_aut}. For all $n$ it 
% holds that $\max_{(s,y)\in D_{N_n}}
% |\Delta_{N_n}(s,y)|= 2$. Moreover, the input and output
% alphabets of the machines $N_n$ are fixed. 
% Thus it is possible to apply
% \cref{thm:split} to each $N_n$, obtaining families
% $\{H_n\}_n, \{M_n\}_n$ from $X$ to $Y$
% and from $X$ to $Z$ respectively, where $X, Y, Z$
% are some fixed alphabets. The fact that
% these two families satisfy the statement follows
% from \cref{thm:big_spec} and \cref{thm:split}.
% \end{proof}

\section{Other Compositions}

The results and techniques exposed so far pertain to the optimization and synthesis of the tail component in cascade compositions. However, it is possible to 
study the analogous problems in a more general setting through polynomial reductions. 
Consider a system consisting of two interconnected components $H$ and $T$, where
all of $T$'s output signals can be externally observed. Without loss of generality we can assume that (1) $T$'s input signals coincide with $H$'s output signals, (2) the system's output signals coincide with $T$'s output
signals, and (3) $H$'s input
signals are the system input signals plus $T$'s output ones. This situation is depicted in \cref{fig:general_cascade}. We claim that
the minimization and synthesis of $T$ in this context can be polynomially
reduced to those of the tail in a cascade composition. The reductions are similar to the ones in \cite{Wang_Brayton_1993}, and can be found in \cref{ann:reductions}.

\section{Experimental Evaluation}
Since the Tail Minimization Problem is in NP, it allows for a straightforward reduction to SAT, which already improves over the doubly-exponential complexity of the K-N algorithm. We show in
\cref{fig:naive} preliminary experimental results 
comparing this solution 
with our proposed one. The baseline method uses a ``naive'' CNF encoding to decide whether the tail component is $n$-replaceable. Like in our proposed approach, this is done
for increasing values of $n$ until a satisfiable CNF is obtained. The encoding can be seen as analogous to ours but without the information about the incompatibility graph and the partial solution. 
The experiments were run on cascade compositions consisting
of independently randomly generated machines $M$ and $T$
with $n=|S_M|=|S_T|$ and input/output alphabets of size 4.
Mean CPU times of all runs for each $n$ are plotted for both the baseline algorithm and ours. 
The baseline implementation is not able to complete any instance with $n \geq 12$
with a timeout of $10$ minutes, while our algorithm solved all
instances in under a minute. We conclude that our approach has a clear benefit over a straightforward approach for this class of random instances. 
Both the implementation of our algorithm and the baseline
algorithm use CryptoMiniSat \cite{DBLP:conf/sat/SoosNC09}. 
All experiments were run on a Intel Core i5-6200U (2.30GHz) machine.

\begin{figure}[tbh]
\centering
\begin{subfigure}{0.45\textwidth}
\includegraphics[width=\textwidth]{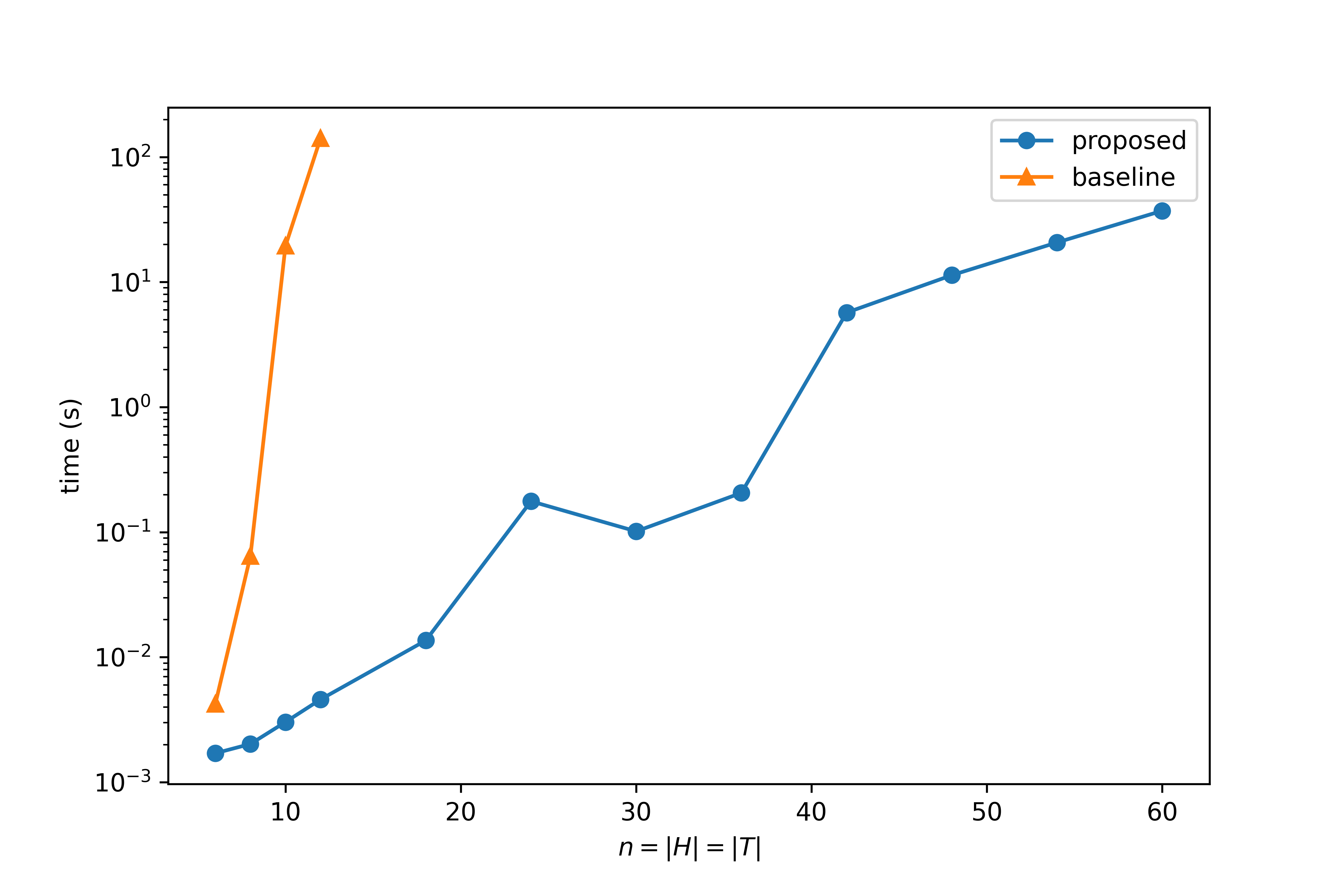}
 \caption{A comparison between our proposed algorithm and a the baseline one.}
\label{fig:naive}
\end{subfigure}
    \hfill
    \begin{subfigure}{0.45\textwidth}
    \includegraphics[width=\textwidth]{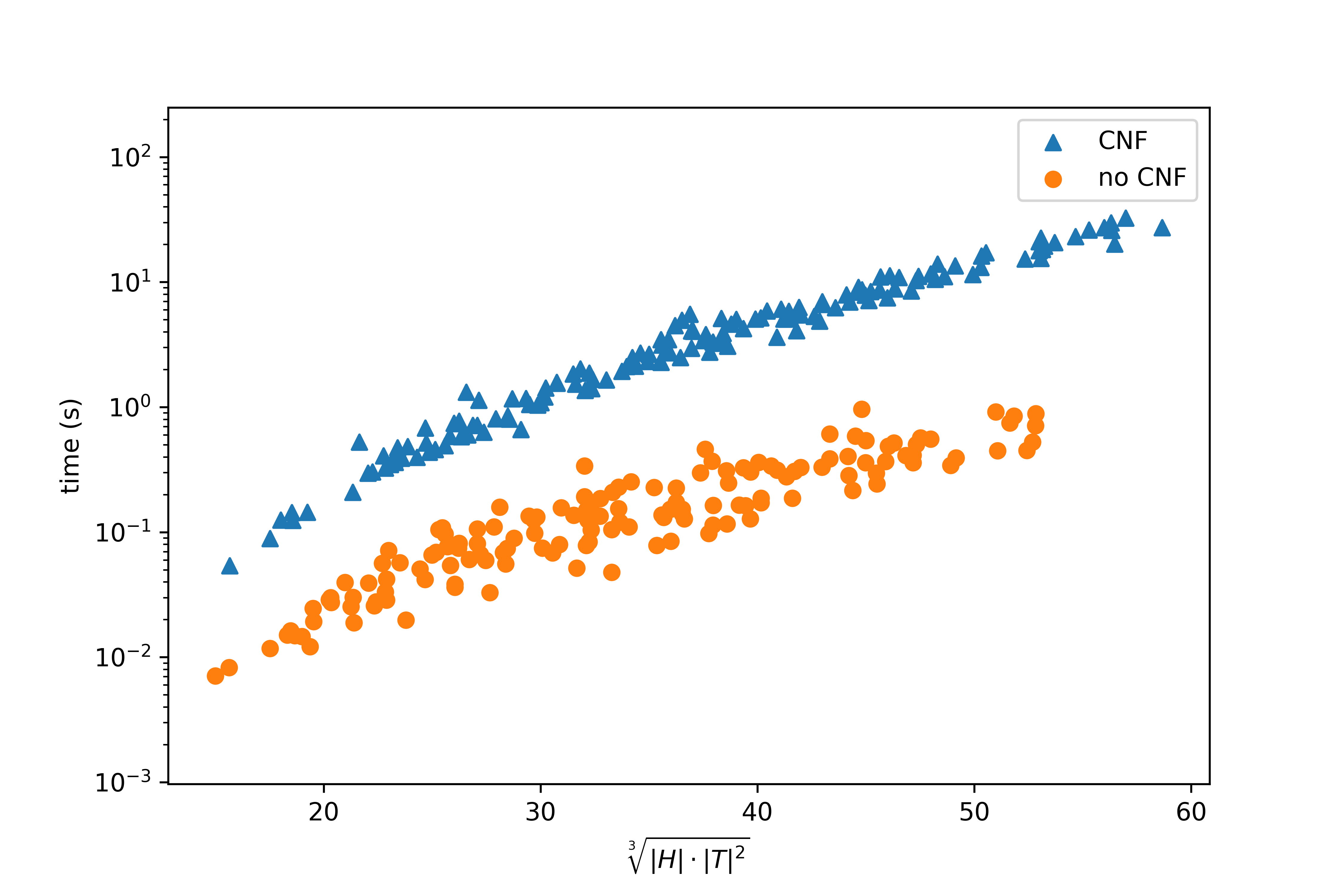}
     \caption{Random experiments run with our proposed minimization method.
     }
    \label{fig:bimodal}
    \end{subfigure}
    %\vspace{0.3cm}
\end{figure}
 
One reason for the good performance is that our algorithm skips the CNF
encoding entirely 
in about half the instances by 
using the size of the partial solution, 
as discussed in \cref{sec:mini_end}. 
To illustrate this we ran 200
additional experiments where
$T$ and $H$ were generated independently with random sizes between 12 and 60 states. The
running times are shown in \cref{fig:bimodal}, where
orange points correspond to
instances where
it was possible to avoid any CNF encoding, and the blue ones correspond to the rest. It can
be observed that orange and blue points form two separate clouds of points. It is apparent that the algorithm is significantly more efficient when it does not call the SAT solver, but the approach uses under a minute regardless of whether the SAT solver is employed.

% \acknowledgements{We thank Dejan Nickovic for stimulating discussions on FSM minimization, and Filip Cano for helping to bring the manuscript to its
% current shape}%TODO
\newpage

\bibliography{main}

\begin{thebibliography}{10}

\bibitem{abelMEMINSATbasedExact2015}
Andreas Abel and Jan Reineke.
\newblock {{MEMIN}}: {{SAT}}-based exact minimization of incompletely specified
  {{Mealy}} machines.
\newblock In {\em 2015 {{IEEE}}/{{ACM International Conference}} on
  {{Computer}}-{{Aided Design}} ({{ICCAD}})}, pages 94--101. {IEEE}, 2015.
\newblock \href {https://doi.org/10.1109/ICCAD.2015.7372555}
  {\path{doi:10.1109/ICCAD.2015.7372555}}.

\bibitem{benedettoModelMatchingFinitestate2001}
M.~D.~Di Benedetto, A.~Sangiovanni-Vincentelli, and T.~Villa.
\newblock Model matching for finite-state machines.
\newblock {\em IEEE Transactions on Automatic Control}, 46(11):1726--1743,
  2001.
\newblock \href {https://doi.org/10.1109/9.964683}
  {\path{doi:10.1109/9.964683}}.

\bibitem{bloem2018graph}
Roderick Bloem, Krishnendu Chatterjee, and Barbara Jobstmann.
\newblock Graph games and reactive synthesis.
\newblock In {\em Handbook of Model Checking}, pages 921--962. Springer, 2018.

\bibitem{bloem2014sat}
Roderick Bloem, Robert K{\"o}nighofer, and Martina Seidl.
\newblock Sat-based synthesis methods for safety specs.
\newblock In {\em International Conference on Verification, Model Checking, and
  Abstract Interpretation}, pages 1--20. Springer, 2014.

\bibitem{dowling1984linear}
William~F Dowling and Jean~H Gallier.
\newblock Linear-time algorithms for testing the satisfiability of
  propositional horn formulae.
\newblock {\em The Journal of Logic Programming}, 1(3):267--284, 1984.

\bibitem{DBLP:journals/sttt/FinkbeinerS13}
Bernd Finkbeiner and Sven Schewe.
\newblock Bounded synthesis.
\newblock {\em Int. J. Softw. Tools Technol. Transf.}, 15(5-6):519--539, 2013.
\newblock \href {https://doi.org/10.1007/s10009-012-0228-z}
  {\path{doi:10.1007/s10009-012-0228-z}}.

\bibitem{DBLP:journals/jacm/Ginsburg59}
Seymour Ginsburg.
\newblock On the reduction of superfluous states in a sequential machine.
\newblock {\em J. {ACM}}, 6(2):259--282, 1959.
\newblock \href {https://doi.org/10.1145/320964.320983}
  {\path{doi:10.1145/320964.320983}}.

\bibitem{grasselliMethodMinimizingNumber1965}
A.~Grasselli and F.~Luccio.
\newblock A {{Method}} for {{Minimizing}} the {{Number}} of {{Internal States}}
  in {{Incompletely Specified Sequential Networks}}.
\newblock {\em IEEE Transactions on Electronic Computers}, EC-14(3):350--359,
  June 1965.
\newblock \href {https://doi.org/10.1109/PGEC.1965.264140}
  {\path{doi:10.1109/PGEC.1965.264140}}.

\bibitem{DBLP:books/daglib/0086041}
Gary~D. Hachtel and Fabio Somenzi.
\newblock {\em Logic synthesis and verification algorithms}.
\newblock Kluwer, 1996.

\bibitem{harrisSynthesisFiniteState1998}
M.S. Harris.
\newblock Synthesis of finite state machines: {{Functional}} optimization.
\newblock {\em Microelectronics Journal}, 29(6):364--365, 1998.
\newblock \href {https://doi.org/10.1016/S0026-2692(97)00075-X}
  {\path{doi:10.1016/S0026-2692(97)00075-X}}.

\bibitem{hennie1968finite}
Frederick~C Hennie.
\newblock {\em Finite-state models for logical machines}.
\newblock Wiley, 1968.

\bibitem{DBLP:books/aw/HopcroftU79}
John~E. Hopcroft and Jeffrey~D. Ullman.
\newblock {\em Introduction to Automata Theory, Languages and Computation}.
\newblock Addison-Wesley, 1979.

\bibitem{huey-yihwangMultilevelLogicOptimization1995}
{Huey-Yih Wang} and R.~K. Brayton.
\newblock Multi-level logic optimization of {{FSM}} networks.
\newblock In {\em Proceedings of {{IEEE International Conference}} on
  {{Computer Aided Design}} ({{ICCAD}})}, pages 728--735, November 1995.
\newblock \href {https://doi.org/10.1109/ICCAD.1995.480254}
  {\path{doi:10.1109/ICCAD.1995.480254}}.

\bibitem{joonkikimSimplificationSequentialMachines1972}
{Joonki Kim} and M.M. Newborn.
\newblock The {{Simplification}} of {{Sequential Machines}} with {{Input
  Restrictions}}.
\newblock {\em IEEE Transactions on Computers}, C-21(12):1440--1443, 1972.
\newblock \href {https://doi.org/10.1109/T-C.1972.223521}
  {\path{doi:10.1109/T-C.1972.223521}}.

\bibitem{june-kyungrhoDonCareSequences1994}
{June-Kyung Rho} and F.~Somenzi.
\newblock Don't care sequences and the optimization of interacting finite state
  machines.
\newblock {\em IEEE Transactions on Computer-Aided Design of Integrated
  Circuits and Systems}, 13(7):865--874, 1994.
\newblock \href {https://doi.org/10.1109/43.293943}
  {\path{doi:10.1109/43.293943}}.

\bibitem{Kam_Villa_Brayton_Sangiovanni-Vincentelli_1994}
T.~Kam, T.~Villa, R.~Brayton, and A.~Sangiovanni-Vincentelli.
\newblock A fully implicit algorithm for exact state minimization.
\newblock In {\em 31st Design Automation Conference}, page 684–690, Jun 1994.
\newblock \href {https://doi.org/10.1145/196244.196615}
  {\path{doi:10.1145/196244.196615}}.

\bibitem{morgenstern2013solving}
Andreas Morgenstern, Manuel Gesell, and Klaus Schneider.
\newblock Solving games using incremental induction.
\newblock In {\em International Conference on Integrated Formal Methods}, pages
  177--191. Springer, 2013.

\bibitem{DBLP:journals/tc/PaullU59}
Marvin~C. Paull and Stephen~H. Unger.
\newblock Minimizing the number of states in incompletely specified sequential
  switching functions.
\newblock {\em {IRE} Trans. Electron. Comput.}, 8(3):356--367, 1959.
\newblock \href {https://doi.org/10.1109/TEC.1959.5222697}
  {\path{doi:10.1109/TEC.1959.5222697}}.

\bibitem{penaNewAlgorithmExact1999}
J.M. Pena and A.L. Oliveira.
\newblock A new algorithm for exact reduction of incompletely specified finite
  state machines.
\newblock {\em IEEE Transactions on Computer-Aided Design of Integrated
  Circuits and Systems}, 18(11):1619--1632, Nov./1999.
\newblock \href {https://doi.org/10.1109/43.806807}
  {\path{doi:10.1109/43.806807}}.

\bibitem{Pfleeger_1973}
C.~P. Pfleeger.
\newblock State reduction in incompletely specified finite-state machines.
\newblock {\em IEEE Transactions on Computers}, C–22(12):1099–1102, Dec
  1973.
\newblock \href {https://doi.org/10.1109/T-C.1973.223655}
  {\path{doi:10.1109/T-C.1973.223655}}.

\bibitem{Rho_Somenzi_Jacoby}
June-Kyung Rho, Gary~D Hachtel, Fabio Somenzi, and Reily~M Jacoby.
\newblock Exact and heuristic algorithms for the minimization of incompletely
  specified state machines.
\newblock {\em IEEE transactions on computer-aided design of integrated
  circuits and systems}, 13(2):167--177, 1994.

\bibitem{DBLP:conf/sat/SoosNC09}
Mate Soos, Karsten Nohl, and Claude Castelluccia.
\newblock Extending {SAT} solvers to cryptographic problems.
\newblock In Oliver Kullmann, editor, {\em Theory and Applications of
  Satisfiability Testing - {SAT} 2009, 12th International Conference, {SAT}
  2009, Swansea, UK, June 30 - July 3, 2009. Proceedings}, volume 5584 of {\em
  Lecture Notes in Computer Science}, pages 244--257. Springer, 2009.
\newblock \href {https://doi.org/10.1007/978-3-642-02777-2\_24}
  {\path{doi:10.1007/978-3-642-02777-2\_24}}.

\bibitem{unger1965flow}
Stephen~H Unger.
\newblock Flow table simplification-some useful aids.
\newblock {\em IEEE Transactions on Electronic Computers}, (3):472--475, 1965.

\bibitem{villaUnknownComponentProblem2012}
Tiziano Villa, Nina Yevtushenko, Robert~K Brayton, Alan Mishchenko, Alexandre
  Petrenko, and Alberto Sangiovanni-Vincentelli.
\newblock {\em The unknown component problem: theory and applications}.
\newblock Springer Science \& Business Media, 2011.

\bibitem{Wang_Brayton_1993}
H.~Wang and R.~K. Brayton.
\newblock Input don’t care sequences in fsm networks.
\newblock In {\em Proceedings of 1993 International Conference on Computer
  Aided Design (ICCAD)}, page 321–328, Nov 1993.
\newblock \href {https://doi.org/10.1109/ICCAD.1993.580076}
  {\path{doi:10.1109/ICCAD.1993.580076}}.

\bibitem{watanabeLogicOptimizationInteracting1994}
Yosinori Watanabe.
\newblock {\em Logic Optimization of Interacting Components in Synchronous
  Digital Systems}.
\newblock PhD thesis, EECS Department, University of California, Berkeley, Apr
  1994.
\newblock URL:
  \url{http://www2.eecs.berkeley.edu/Pubs/TechRpts/1994/2554.html}.

\bibitem{watanabeMaximumSetPermissible1993a}
Yosinori Watanabe and Robert~K. Brayton.
\newblock The maximum set of permissible behaviors for {{FSM}} networks.
\newblock In {\em Proceedings of the 1993 {{IEEE}}/{{ACM}} International
  Conference on {{Computer}}-Aided Design}, {{ICCAD}} '93, pages 316--320.
  {IEEE Computer Society Press}, 1993.

\end{thebibliography}

\newpage

\appendix

\section{Horn CNF Encoding for the Incompatibility Relation}
\label{ann:graph}
Let $M$ be an OM from
$Y$ to $Z$.
In this section we describe a way of computing
the compatibility relation over $M$ defined
in \cref{sec:covering}
by finding the minimal
satisfying assignment of a Horn formula.
Let $m\coloneqq 
|S_M|$. We identify
states in $S_M$ with numbers $j\in [m]$. We describe our
Horn encoding as follows. For each $i,j \in [m]$
with $i\leq j$ we introduce a variable $X_{i,j}$ meaning $i\nsim j$.
We give now the clauses of the formula. For each pair $i\leq j$ we consider two scenarios:
(1) If $i$ and $j$
are incompatible
at depth one, we add the clause $X_{i,j}$. (2) Otherwise, for all
$y\in Y,
i^\prime \in \Delta_M(i,y),
j^\prime \in \Delta_M(j,y)
$ we add clauses
$X_{i^\prime,j^\prime}
\implies X_{i,j}
$,
where the values of $i^\prime$ and $j^\prime$ are swapped
if necessary to ensure $i^\prime \leq j^\prime$.\par
It follows from 
\cref{lem:compat_2}
that two states
$i\leq j$ are compatible if and only if $X_{i,j}$ is true in the minimal satisfying assignment for this encoding. It is well-known that assignment can be obtained in time linear in the size of the formula \cite{dowling1984linear}. Moreover, the encoding has size $O(|M|^2)$. Hence,
the compatibility relation over $M$ can be computed in $O(|M|^2)$ time.

\section{Correctness of the Covering Reduction}
\label{ann:covering}
\begin{proof}[Proof that the $N$ constructed in
\cref{thm:cover_reduction} implements $M$:]
Let $\overline{y}\in \Omega_M$. Consider
a run $s_0\coloneqq r_M, y_0, z_0, s_1, \dots, s_n,
y_n, z_n, s_{n+1}$ of $M$ on $\overline{y}$, 
and let 
$C_0\coloneqq r_N, y_0, z^\prime_0, C_1, \dots, C_n,
y_n,$ $z^\prime_n, C_{n+1}$ be the run of $N$ on $\overline{y}$. By construction, 
$s_i\in C_i$ for all $0\leq i \leq n+1$. Moreover, as $\lambda_M(s_i,y_i)=z_i$ it also holds
$\lambda_N(C_i,y_i)=z_i$. Thus, $\lambda_M(\overline{y})=\lambda_N(\overline{y})$
and $N$ implements $M$.
\end{proof}

\section{CNF Encoding for OM Minimization}
\label{ann:minim_cnf}

We describe how to reduce the problem of finding a closed cover of $n$ over
$M$ to SAT, following the CNF encoding given in \cite{abelMEMINSATbasedExact2015}.
We identify the set $S_M$ with the set of integers
$[m]$, where $m\coloneqq |S_M|$.
Finding a closed cover of $n$ compatibles
is equivalent to finding two maps
$C:[n]\rightarrow 2^{[m]}$ and $\mathrm{Succ}:[n]\times Y \rightarrow 2^{[n]}$ that satisfy:
(1) for each $i\in [n]$ the set $C(i)\subseteq [m]$ is a compatible,
(2) $r_M \in C(i)$ for some $i$, 
(3) $\Delta_M(C(i),y)\subseteq C(j)$
for all $j\in \mathrm{Succ}(i,y)$, and
(4) $\mathrm{Succ}(i,y)\neq \emptyset$
for each $i\in [n],y\in Y$. 
The propositional variables of the CNF encoding are the following:
A variable $L_{s,i}$ for each $s\in [m], i\in [n]$,
encoding that $s\in C(i)$. 
A variable $N_{i,j,y}$ for each pair $i,j\in [n], y\in Y$, encoding that $j\in Succ(i,y)$. 
The clauses of our encoding are as follows:
A clause 
$\neg L_{s_1,i} \vee \neg L_{s_2,i}$,
for each $i\in [n], s_1,s_2\in [m]$ with 
$s_1\leq s_2$ and $s_1\nsim s_2$. This encodes
condition (1). 
A clause
$\vee_{i\in [n]} L_{r,i}$, which encodes condition (2). 
We have a clause
$\vee_{j\in [n]} N_{i,j,y}$ for each $i\in [n], y\in Y$, encoding condition (3).
Finally, there is a clause
$(N_{i,j,y} \wedge L_{s,i}) \implies \vee L_{s^\prime,j}$
for each $s\in [m]$, $y\in Y$ with $(s,y)\in D_M, s^\prime\in \Delta_M(s,y)$, $i,j\in [n]$. These clauses encode condition (4).  The CNF obtained so far already encodes the desired  problem. However, we also 
use a partial solution $\Cl \subseteq 
S_M$ for adding symmetry breaking
predicates. If $\Cl=\{s_1,\dots,s_l\}$
is a set of $l\leq n$ pair-wise incompatible states, we can add the clauses $L_{s_i,i}$ for each $0\leq i \leq l$.

\section{Characterization of Feasible Synthesis Instances}
\label{ann:feasible_proof}
\begin{proof}[Proof of \cref{thm:feasible}]
The fact that $T$'s existence implies the second part of the statement is straightforward,
as discussed above. We prove the other implication. Suppose 
that any two words $\overline{x}, \overline{x}^\prime$ with
$\lambda_H(\overline{x})=\lambda_H(\overline{x}^\prime)$ also satisfy 
$\lambda_M(\overline{x})=\lambda_M(\overline{x}^\prime)$. This defines
a map $F:\Out(H)\rightarrow Z^*$
by setting $F(\overline{y})=\lambda_M(\overline{x})$ if $\overline{y}=\lambda_H(\overline{x})$. Let us define the language $\Lcal_F\subseteq (Y\times Z)^*$ 
as the one consisting of the words
$\langle \overline{y},
F(\overline{y})\rangle$ for all
$\overline{y}\in \Out(H)$. Clearly this
language is regular and prefix-closed. Hence, there is some Mealy machine
$T$ form $Y$ to $Z$ satisfying 
$F(\overline{y})=\lambda_T(\overline{y})$ for all $\overline{y}\in \Out(H)$.
By definition of $F$, we have
$\lambda_T(\lambda_H(\overline{x}))=
\lambda_M(\overline{x})$ for all $
\overline{x}\in X^*$, and $T\circ H
\equiv M$.
This proves the result. 
\end{proof}

\section{Observation Machines with no Small Implementations}
\label{ann:big_aut}
Fix $n>0$. \Cref{fig:exp} shows the construction of an OM $M$
(bottom right) from $Y\coloneqq\{a,b\}$ to $Z\coloneqq\{a,b,\top\}$
for which all implementations
have at least $2^n$ states. Moreover, $d(M)=2$.
Let $y_0 y_1 \dots y_{2n-1}$ be a word in $Y^*$. 
$M$ behaves the following way:
(1) It outputs $\top$ in response to the first $n$ inputs $y_0,\dots, y_{n-1}$.
(2) Starting from $y_{n}$,
$M$ responds with $\top$ until
$b$ is received as an input.
(3) If $y_{n+i}$ is the first input equal to $b$ 
since $y_{n}$, then $M$ outputs $y_i$ in response. Intuitively, an implementation of $M$ has to have at least $2^n$ states because it has to store the first $n$ inputs in order to carry out (3).

\label{sec:exp_obs}
\begin{figure}[tbh]
    \centering
    \begin{subfigure}{0.45\textwidth}
    \includegraphics[width=\textwidth]{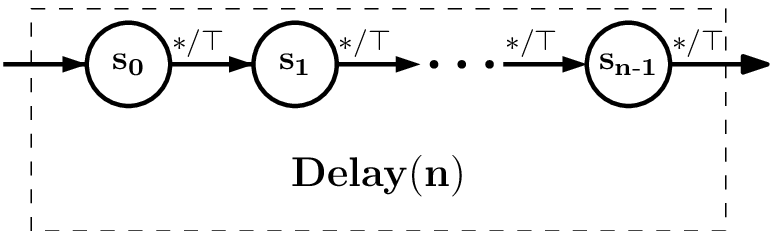}
    \vspace{0.3cm}
    \end{subfigure}
    \hfill
     \begin{subfigure}{0.50\textwidth}
    \includegraphics[width=\textwidth]{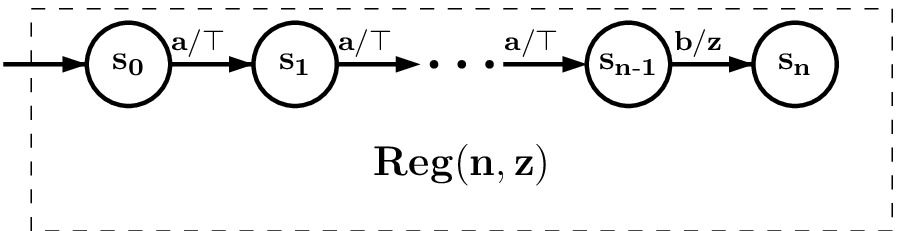}
    \vspace{0.3cm}
    \end{subfigure}
    \\
    \vspace{0.3cm}
    
    \begin{subfigure}{0.3\textwidth}
    \includegraphics[width=\textwidth]{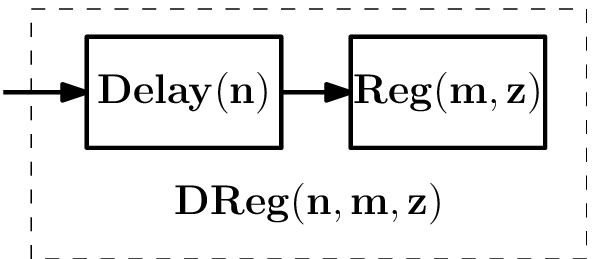}
    \vspace{0.3cm}
    \end{subfigure} 
    \hfill
    \begin{subfigure}{0.65\textwidth}
    \includegraphics[width=\textwidth]{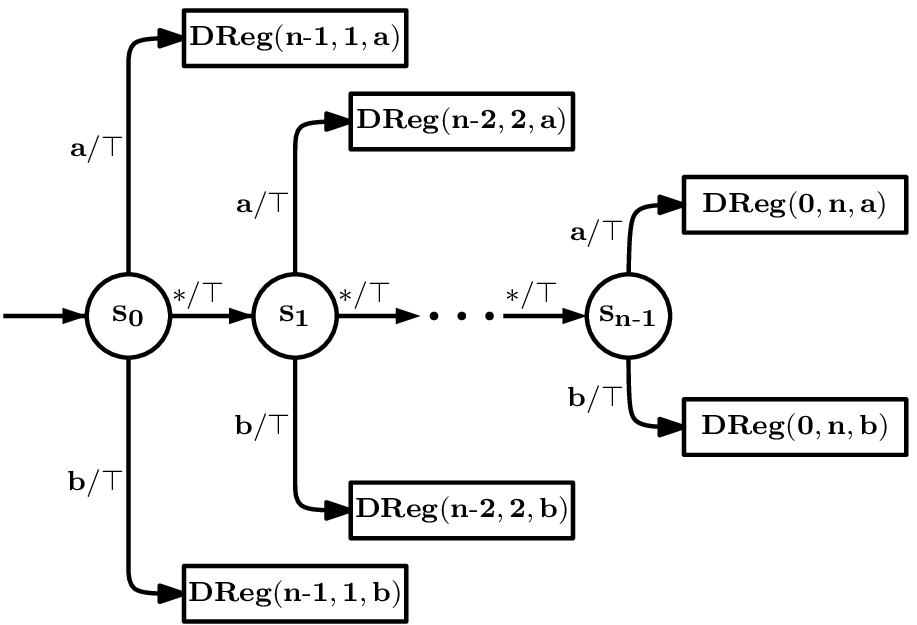}
    \vspace{0.3cm}
    \end{subfigure}
    \caption{The construction of an OM with no small implementations (bottom right). The symbol $*$ stands for both $a,b$.}
    \label{fig:exp}
\end{figure}

\section{Correctness of the splitting construction}
\label{app:split}
\begin{proof}[Correctness of the constructions in
\cref{thm:split}]
Here we show that the given constructions for $H$ and $M$
indeed fulfill the result. We have to show that (1)
some $T$ satisfies $T\circ H \equiv M$ and (2) any such 
$T$ also satisfies $\lambda_N(\overline{y})=
\lambda_T(\overline{y})$ for all $\overline{y}\in
 \Omega_N$.
%  By \cref{thm:feasible}, (1) is equivalent
% to the condition that
% $\lambda_H(\overline{x})=\lambda_H(\overline{x}^\prime)$
% implies 
% $\lambda_M(\overline{x})=\lambda_M(\overline{x}^\prime)$
% any two $\overline{x},\overline{x}^\prime \in X^*$. 
Remember that characters $x\in X$ are 
(input,label) pairs $(y,l)\in Y\times [k]$.
We say that a word $\overline{x}\coloneqq
(y_0,l_0)\dots (y_n,l_n)
\in X^*$ labels a run of $N$
if $N$ has a run 
$r_N= s_0, y_0, z_0, s_1,\dots, s_n, y_n, z_n, s_{n+1}$
where $L(s_i,y_i,s_{i+i})=l_i$ for all $0\leq i \leq n$.
Given such $\overline{x}$ by construction
$s_0, (y_0,l_0), y_0, s_1, \dots, s_n,
(y_n,l_n), y_n, s_{n+1}$ and
$s_0, (y_0,l_0), z_0, s_1, \dots, s_n,
(y_n,l_n), z_n, s_{n+1}$ are the runs
of $H$ and $M$ on $\overline{x}$, respectively. 
Let $\overline{x}\in X^*$ be arbitrary. We can write
$\overline{x}=\overline{u}\overline{v}$, where
$\overline{u}=\langle \overline{y},
\overline{l} \rangle$ is the largest prefix 
of $\overline{x}$ which labels a run of $N$.
The previous observation yields 
$\lambda_H(\overline{u})=\overline{y}$,
$\lambda_M(\overline{u})=\lambda_N(\overline{y})$. 
Moreover, by construction
$\lambda_H(s,\overline{v}),\lambda(s,\overline{v})$
are both sequences of only $\bot$ symbols,
where $s=\delta_H(\overline{u})=\delta_M(\overline{u})$.
This way we have shown that the language 
$E(H,M)=\{ \, \langle \lambda_H(\overline{x}),
\lambda_M(\overline{x})\rangle \, \mid \,
\overline{x}\in X^* \, \}$ equals
$\{ \,
\langle \overline{y},
\lambda_N(\overline{y})\rangle \,
\mid \, \overline{y}\in \Omega_N \,
\} \{ \, \bot \, \}^*
$.
This identity proves both (1) and (2). Indeed, if $\lambda_H(\overline{x})=\lambda_H(\overline{x}^\prime)$,
then necessarily $\lambda_M(\overline{x})=\lambda_M(
\overline{x}^\prime)$. By \cref{thm:feasible} this implies (1).
Also, if $\lambda_H(\overline{x})=\overline{y}$ and
$\overline{y}
\in \Omega_N$,
then $\lambda_M(\overline{x})=\lambda_N(\overline{y})$, which shows
(2). 
\end{proof}

\section{Reductions to Cascade Compositions}
\label{ann:reductions}
We put $T\odot H$ for the composition between two Mealy machines $H$ and $T$ shown in
\cref{fig:general_cascade}.
This composition is not well-defined in general \cite[Section 6.2]{harrisSynthesisFiniteState1998}, but a sufficient requirement is that $H$ is a Moore machine, for example.
Analogously to \cref{prob:minim},
we consider the problem of finding a minimal replacement for $T$ in $T\odot H$ when both $T$
and $H$ are given. In \cref{fig:red1} it is shown how to build a machine $H^\prime$
such that finding a minimal
replacement for $T$ in $T\odot H$ is equivalent to
finding a minimal replacement for $T$ in the cascade composition
$T \circ H^\prime$.
Similarly, when $M$ and $H$
are given, we can study the problem of finding $T$
with $T\odot H \equiv M$. In 
\cref{fig:red2} it is shown how to build $H^\prime$ in a way that $T$ satisfying $T\odot H \equiv M$ is equivalent to 
$T\circ H^\prime \equiv M$.

\begin{figure}[tbh]
    \centering
    \begin{subfigure}{0.40\textwidth}
    \includegraphics[width=\textwidth]{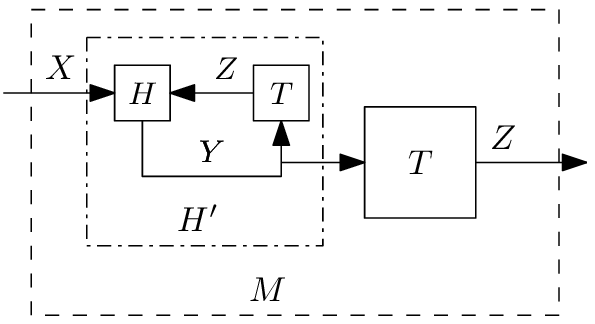}
    \caption{}
    \label{fig:red1}
    \end{subfigure} 
    \hfill
    \begin{subfigure}{0.40\textwidth}
    \includegraphics[width=\textwidth]{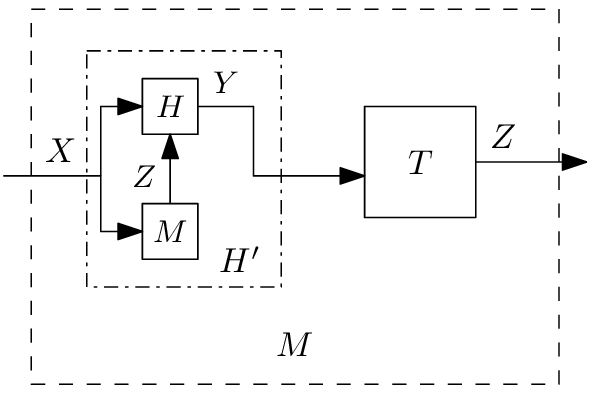}   
    \caption{}
    \label{fig:red2}
    \end{subfigure}
    \caption{Polynomial transformations to cascade compositions.}
\end{figure}

\end{document}